\newtheorem{theorem}{Theorem}
\newtheorem{lemma}[theorem]{Lemma}
\newtheorem{problem}{Problem}
\newtheorem{definition}{Definition}
\newtheorem{remark}{Remark}
\newtheorem{example}{Example}
\newtheorem{axiom}{Axiom}
\newcommand{\NewPrefer}{weakly prefer }
\newcommand{\NewPrefers}{weakly prefers }
\newcommand{\WC}{\mathit{WC}}
\newcommand{\Mon}{\mathit{Mon}}
\newcommand{\A}{\mathit{A}}
\newcommand{\Emb}{\mathit{Emb}}
\newcommand{\SA}{\mathit{SA}}
\newcommand{\GS}{\mathit{GS}}
\newcommand{\SGS}{\mathit{SGS}}
\newcommand{\SAp}{\mathit{SA'}}
\begin{document}
\title{Communities in Preference Networks:\\ Refined Axioms and Beyond}
\author{\IEEEauthorblockN{Gang Zeng\IEEEauthorrefmark{1}\IEEEauthorrefmark{2}\IEEEauthorrefmark{4},
Yuyi Wang\IEEEauthorrefmark{3}, Juhua Pu\IEEEauthorrefmark{4}, Xingwu Liu\IEEEauthorrefmark{1}\IEEEauthorrefmark{2}\thanks{*Correspondence should be addressed to Xingwu Liu.},
Xiaoming Sun\IEEEauthorrefmark{1}\IEEEauthorrefmark{2} and Jialin Zhang\IEEEauthorrefmark{1}\IEEEauthorrefmark{2}}
\IEEEauthorblockA{\IEEEauthorrefmark{1}University of Chinese Academy of Sciences, Beijing, 100049, China
}
\IEEEauthorblockA{\IEEEauthorrefmark{2}Institute of Computing Technology, Chinese Academy of Sciences, 100190, Beijing, China\\
{zenggang,liuxingwu,sunxiaoming,zhangjialin}@ict.ac.cn
}
\IEEEauthorblockA{\IEEEauthorrefmark{3}Distributed Computing Group,
ETH Z\"{u}rich\\
yuyiwang920@gmail.com
}
\IEEEauthorblockA{\IEEEauthorrefmark{4}State Key Laboratory of Software Development Environment,
Beihang University\\
pujh@buaa.edu.cn\thanks{This work is partially supported by the National Key Reaearch and Development Program of China (2016YFB1000201),
National Natural Science Foundation of China (11601375,61420106013),
Science Foundation of Shen-zhen City in China (JCYJ20160419152942010),
and State Key Laboratory of Software Development Environment Open Fund (SKLSDE-2015ZX-25).}}
}

%
\maketitle

\begin{abstract}
\cite{teng2016itcs} investigated essential requirements for communities in preference networks.
They defined six axioms on community functions, i.e., community detection rules.
Though having elegant properties, the practicality of this axiom system is compromised by the intractability of checking two critical axioms,
so no nontrivial consistent community function was reported in \citep{teng2016itcs}.
By adapting the two axioms in a natural way,
we propose two new axioms that are efficiently-checkable.
We show that most of the desirable properties of the original axiom system are preserved.
More importantly, the new axioms provide a general approach to constructing consistent community functions.
We further find a natural consistent community function that is also enumerable and samplable, answering an open problem in the literature.
\end{abstract}

\section{Introduction}
\noindent
Clustering individuals in a social network, called community detection,
is a fundamental task in graph mining and has been adequately studied.
Community detection has different forms,
depending on whether overlapping communities are allowed \citep{palla2005uncovering,baumes2005efficient,zhang2007identification,ahn2009communities},
whether hierarchical structures are taken into account \citep{Sibson1973SLINK},
and in which form of the data is provided, etc.
People proposed a number of algorithms to find communities,
based on different principles such as spectral clustering \citep{Hoffman1973Lower},
density-based methods \citep{Ester1996A},
modularity-based method \citep{newman2006modularity}.
No matter which form and which algorithm we choose, a community is usually considered as a group of closely related individuals.
However, ``a group of closely related individuals'' is a rather rough concept,
and there does not exist a widely accepted definition of communities.


In this paper, we try to axiomatize the concept of communities.
We allow overlapping communities 
and assume the data is given in a preference network.

\cite{kleinberg2003impossibility} developed an axiomatic
framework for non-overlapping clusterings and defined clustering functions whose inputs are
a set $V$ of $n$ individuals and pairwise distances between these $n$ individuals and whose output is a partition of these vertices (i.e., overlapping communities are not allowed).
He found that three desired clustering axioms (scale-invariance, richness and consistency) cannot be satisfied 
at the same time.

But this negative result did not prevent researchers from investigating axiomatic aspects of clusterings.
\cite{Marina2005Comparing} proposed axioms to compare clusterings.
\cite{Ben2008Measures} introduced axioms on clustering quality measures.
\cite{zadeh2009uniqueness} presented a unique theorem for clustering
and showed that the impossibility result in \citep{kleinberg2003impossibility} can be avoided by relaxing one of clustering axioms.
\cite{ackerman2010towards} introduced several more properties of clustering functions to
taxonomize a set of popular clustering algorithmic paradigms
and strengthened the impossibility result on these properties.
\cite{gollapudi2009axiomatic} devised an evaluation method to characterize the axioms.

Following this line but in a more general sense,
\cite{teng2016itcs} recently initiated the axiomization of overlapping communities over preference networks.
A preference network is a finite set of individuals each having a preference indicating his/her affinity with others.
As an expressive model, it covers graphs, a structure that is commonly used in the field of community detection.
\cite{teng2016itcs} proposed a system of six axioms for community functions
and showed various properties of this system.

Among the properties, the lattice structure and the intersection lemma are most striking,
because they serve as a guideline to construct desirable community functions.
Specifically, the lattice structure means that the axioms satisfying all the six axioms form a lattice under the natural
``$\cup$" and ``$\cap$" operations.
Borgs et al.\ also identified the bottom (the minimum, called $\mathcal{C}_{cliq}$) and the top (the maximum, called $\mathcal{C}_{comp}$) of this lattice.
The intersection lemma claims that a community function satisfies all the axioms if and only if it is the intersection of $\mathcal{C}_{comp}$
and a community function satisfying \textit{four} axioms which are simple and easy to conform.
The significance of the intersection lemma is that it paves a way to construct a community function satisfying all the axioms,
while the lattice structure enables to derive more such community functions on this basis.

However, \cite{teng2016itcs} did not \textit{construct} any nontrivial community function satisfying all the axioms.
This is mainly due to the fact that it is coNP-complete to check two of the axioms, namely Group Stability ($\GS$) and Self-Approval ($\SA$).
Furthermore, $\GS$ and $\SA$ play critical roles in this axiomization for two reasons.
First, only they cover the stability of communities, while an unstable community does not make sense, especially in the context of preference networks.
As an analogue, consider the stable matching problem where affinity among the players is determined by their preferences and the stability of matchings is the focus (see e.g., \cite{gusfield1989stable}).
Second, these two axioms set the top of the lattice.
The coNP-completeness of the two most important axioms surely compromises the practicality of the axiom system.

On this ground, we make an effort to improve the axiom system in \citep{teng2016itcs} and construct desired community functions. Our contributions are fourfold.

\begin{enumerate}
 \item We show that axiom $\SA$ can degenerate so as to be checkable in polynomial time, leaving the axiom system equivalent.
 \item We propose an efficiently checkable alternative axiom $\SGS$ to the original $\GS$.
 The modification is natural, and preserves all the good properties of the original axiom system.
 \item We present a general approach to construct nontrivial community functions that conform to the new axiom systems.
 \item We find a natural, consistent, constructive, enumerable, and samplable community function,
 answering an open problem in \citep{teng2016itcs}.
 Note that both this community function and the just-mentioned general approach remain valid in the original axiom system,
 but the construction essentially depends on the new axiom $\SGS$ .
\end{enumerate}

The structure of this paper is as follows. In Section \ref{sec:pre},
we review some necessary notions and notations, and show that $\SA$ can be replaced by a simpler axiom.
We define a strengthening alternative to the crucial axiom $\GS$ in Section \ref{sec:poly}, named $\SGS$.
In Section \ref{Complexity_and_lattice},
we prove that $\SGS$ is efficiently checkable, and various desirable properties of the original axiom system are prereserved.
In Section \ref{Constructive_community_functions}, we present a systematic approach to finding consistent and constructive community functions,
and find an ideal community function.
Finally, we conclude this paper in Section \ref{sec:con} with future works and open problems.

\section{Preliminaries}\label{sec:pre}
In this paper, we consider community detection as a task that translates a preference network into a set of communities.
First of all, we review the definition of preference networks and the axiom system introduced by \cite{teng2016itcs}, and show some properties of the axiom system.

\subsection{Preference networks}
A preference network is a finite set of individuals each of which 
ranks all the individuals (including herself) in (strict) order of preference.

Formally, consider a non-empty set $V=\{1,2,\ldots, n\}$ and $L(V)$, the set of all total orders on $V$.

A total order $\pi\in L(V)$ is equivalently defined as a bijection $\pi: V\rightarrow V$, denoted by $[v_1 v_2 \ldots v_n]$
where $v_i=\pi^{-1}(i)$ for $i= 1,2,\ldots, n$.
For any $u,v\in V$ and $\pi\in L(V)$, $\pi(u)$ is referred to as the rank of $u$ in $\pi$,
and we say that $\pi$ prefers $u$ to $v$ if $\pi(u)<\pi(v)$, denoted by $u\succ_{\pi} v$.

A \emph{preference profile} $\Pi$ on $V$ is a list of total orders $\{\pi_u\}_{u\in V}\in L(V)^V$ on $V$. Given $u,v,w\in V$,
we say that $u$ prefers $v$ to $w$ if so does $\pi_u$, denoted by $v\succ_{\pi_u} w$.
Given a preference profile $\Pi$ on $V$, the
pair $N=(V, \Pi)$ is called a \emph{preference network}.

A \emph{community function} is a function $\mathcal{C}$ that maps each preference network $N=(V,\Pi)$ to a collection $\mathcal{C}(N)\subseteq 2^V$.
Each $S\in \mathcal{C}(N)$ is called a community of $N$ defined by $\mathcal{C}$, and the term ``defined by $\mathcal{C}$" will be omitted if clear from context.

As an example, consider the community function
$$\mathcal{C}_{cliq} (V,\Pi)=\{S\subseteq V: \forall u, v\in S, \forall w\in V-S, v\succ_{\pi_u}w\}.$$
Every community $S$ defined by $\mathcal{C}_{cliq}$ is called a clique\footnote{Note that a clique in a preference network does not mean it is a complete subgraph.}.
Roughly speaking, a clique is a subset of individuals preferring each other to those not in the subset.

\begin{example}\label{example:1}
We consider two families in a village and each family has three members.
The members in a family have the same preference and they prefer their own family members to other villagers.
In this case, every family is a community.
Formally,
suppose $N=(V,\Pi)$ is the preference network where $V=\{1, 2, 3, 4, 5, 6\}$, $\pi_1=\pi_2=\pi_3=[123456]$
and $\pi_4=\pi_5=\pi_6=[456123]$.
Sets $\{1,2,3\}$ and $\{4,5,6\}$ are in $\mathcal{C}_{cliq}(N)$.
\end{example}

Actually preference frameworks have been used in various occasions, e.g., voting \citep{gale1962college,gusfield1989stable,roth1984evolution},
network routing \citep{Rekhter1994A,caesar2005bgp},
and coalition formation in collaborative games \citep{brams2003dynamic,roth1984stable}.
\cite{balcan2013finding} further elaborated on how the concept of preference networks properly models social networks
and why it essentially captures their underlying community structures.

\subsection{Existing axioms}
\cite{teng2016itcs} proposed six axioms for community functions.
They are included here for the paper to be self-containing.
Let's arbitrarily fix a non-empty finite set $V$ and a community function $\mathcal{C}$.

\begin{axiom}[\textbf{Anonymity(A)}]
Let $S, S'\subseteq V$, and $\Pi, \Pi'$ be two preference profiles on $V$. Assume that $S'=\sigma(S)$ and $\Pi'=\sigma(\Pi)$ for some permutation $\sigma: V\rightarrow V$. Then $S\in \mathcal{C}(N)$ if and only if $S'\in \mathcal{C}(N')$,  where $N=(V, \Pi)$ and $N'=(V, \Pi')$.
\end{axiom}

This axiom means that whether $S$ is a community is determined by the preference profile rather than by labels of the individuals.

\begin{axiom}[\textbf{Monotonicity(Mon)}]
Let $S\subseteq V$, and $\Pi, \Pi'$ be two preference profiles on $V$. If $u\succ_{\pi_s}v\Rightarrow u\succ_{\pi'_s}v$ for $\forall s,u\in S$ and $\forall v\in V$, then $S\in \mathcal{C}(N)\Rightarrow S\in \mathcal{C}(N')$, where $N=(V, \Pi)$ and $N'=(V, \Pi')$.
\end{axiom}

Intuitively,
monotonicity means that if the change of the preference profile does not decrease the ranking of any member
of a community and preserves the relative ranking among the members, then the community remains. This is reasonable since the affinity among the members of the community is improved after the preference profile is changed.

\begin{axiom}[\textbf{Embedding(Emb)}]
 Given two preference networks $N=(V, \Pi)$ and $N'=(V', \Pi')$
 such that $V'\subseteq V$ and $\pi'_u(v)=\pi_u(v)$ for all $u, v\in V'$, then $\mathcal{C}(N')=\mathcal{C}(N)\cap 2^{V'}$.
\end{axiom}

Intuitively, though $N'$ is embedded into $N$, its preferences are not influenced.
Hence it is reasonable that communities inside
$V'$ are formed independently of individuals outside $V'$.

\begin{axiom}[\textbf{World Community(WC)}]
For any preference network $N=(V,\Pi)$, $V\in \mathcal{C}\left(N\right)$.
\end{axiom}

This axiom is self-explanatory.

Before introducing the remaining two axioms,
the concept of \emph{preference} should be extended to \textit{group-preference}\footnote{It is referred to as lexicographic preference in \citep{teng2016itcs}.} so that equal-sized sets can be compared.
Given a preference network $N=(V,\Pi)$ and
non-empty disjoint sets $G,G'\subseteq V$ of the same size,
we say that $s\in V$ \textit{group-prefers} $G'$ to $G$, denoted by $G' \succ_{\pi_s} G$, if one can align the elements
$g_1,g_2,\cdots,g_{|G|}$ of $G$ and $g'_1,g'_2,\cdots,g'_{|G|}$ of $G'$ so that
$g'_i\succ_{\pi_s}g_i$ for all $i=1,2,\cdots,|G|$.

For example, in the preference network in Example \ref{example:1},
$\{3,4\}\succ_{\pi_2} \{5,6\}$ since $3\succ_{\pi_2}5$ and $4\succ_{\pi_2}6$.
But $\{3,4\}\succ_{\pi_5} \{5,6\}$ does not hold, because $5\succ_{\pi_5} 3$ and $6\succ_{\pi_5} 3$.

A set $S\subseteq V$ is called $self$-$approving$ with respect to $\Pi$ if for any $S'\subseteq V-S$ with $|S'|=|S|$,
$S'$ is not group-preferred to $S$ by some $s\in S$.
$S$ is called \emph{group-stable} with respect to $\Pi$
if for any non-empty $G\subsetneq S$ and $G'\subseteq V-S$ with $|G'|=|G|$,
$G'$ is not group-preferred to $G$ by some $s\in S-G$.
Roughly speaking, a set is group-stable if no subset agrees to leave and join another set.


For instance, consider again the preference network in Example \ref{example:1}.
The set $S=\{2,5\}$ is not group-stable, since there exist $G=\{2\}$ and $G'=\{4\}$ such that $5\in S-G$ group-prefers $G'$ to $G$.
Actually, $S$ is not $self$-$approving$ either, because for any $s\in S$, $s$ group-prefers $\{1,4\}$ to $S$.

\begin{axiom}[\textbf{Self-Approval(SA)}]\label{axiom:SA}
    For any preference network $N=(V, \Pi)$, if $S\in \mathcal{C}(N)$, then $S$ is self-approving with respect to $\Pi$.
\end{axiom}

\begin{axiom}[\textbf{Group Stability(GS)}]
    Given a preference network $N=(V, \Pi)$, if $S\in \mathcal{C}(N)$, then $S$ is group-stable with respect to $\Pi$.
\end{axiom}

There exist community functions satisfying all the axioms,
e.g., $\mathcal{C}_{cliq}$ and $$\mathcal{C}_{comp}(N)=\{S\subseteq V: S\textrm{ is self-approving and group-stable}\}.$$

\subsection{Properties}

\cite{teng2016itcs} showed that this axiom system has nice properties.
The most impressive ones include the intersection lemma and the lattice structure.

\begin{lemma}[Lattice, \citep{teng2016itcs}]
Under the operations $\cap$ and $\cup$,
the community functions satisfying $\A$, $\Mon$, $\Emb$, $\WC$, $\GS$,
and $\SA$ form a lattice whose top is $\mathcal{C}_{comp}$ and bottom is $\mathcal{C}_{cliq}$.
\end{lemma}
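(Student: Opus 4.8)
The plan is to work inside the complete lattice of all $2^V$-valued community functions ordered pointwise, declaring $\mathcal{C}_1\le\mathcal{C}_2$ iff $\mathcal{C}_1(N)\subseteq\mathcal{C}_2(N)$ for every preference network $N$, with meet and join given by the pointwise $\cap$ and $\cup$. Since it is already recorded that $\mathcal{C}_{cliq}$ and $\mathcal{C}_{comp}$ satisfy all six axioms, I would reduce the lemma to three claims: (i) the class of functions obeying $\A,\Mon,\Emb,\WC,\GS,\SA$ is closed under $\cap$ and $\cup$, so that it forms a sublattice; (ii) $\mathcal{C}_{comp}$ is its greatest element; and (iii) $\mathcal{C}_{cliq}$ is its least element. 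Claims (ii) and (iii) identify the top and bottom, while (i) makes the sublattice bounded by them.

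For closure I would check each axiom on $\mathcal{C}_1\cap\mathcal{C}_2$ and $\mathcal{C}_1\cup\mathcal{C}_2$, exploiting the logical shape of the axiom. The biconditional axiom $\A$ survives because membership in the intersection (resp.\ union) is the conjunction (resp.\ disjunction) of the two memberships, and both transform identically under the permutation $\sigma$. The transfer axioms $\Mon$ and $\Emb$ push through because an implication $S\in\mathcal{C}(N)\Rightarrow S\in\mathcal{C}(N')$ holding for each of $\mathcal{C}_1,\mathcal{C}_2$ also holds for their conjunction and disjunction; for $\Emb$ I additionally invoke $(\mathcal{C}_1(N)\cup\mathcal{C}_2(N))\cap 2^{V'}=(\mathcal{C}_1(N)\cap 2^{V'})\cup(\mathcal{C}_2(N)\cap 2^{V'})$, i.e.\ distributivity of $\cap$ over $\cup$. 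Finally the membership-constraint axioms $\WC,\SA,\GS$ are preserved: $V$ lies in both $\mathcal{C}_i(N)$, hence in their intersection (and trivially in their union), whereas any $S$ in $\mathcal{C}_1\cap\mathcal{C}_2$ or $\mathcal{C}_1\cup\mathcal{C}_2$ belongs to at least one $\mathcal{C}_i$ and is therefore self-approving and group-stable by that $\mathcal{C}_i$'s axiom. All of this is routine.

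The top is immediate. If $\mathcal{C}$ satisfies $\SA$ and $\GS$, then every $S\in\mathcal{C}(N)$ is simultaneously self-approving and group-stable, which is exactly the membership condition defining $\mathcal{C}_{comp}(N)$; hence $\mathcal{C}(N)\subseteq\mathcal{C}_{comp}(N)$ for all $N$, so $\mathcal{C}\le\mathcal{C}_{comp}$.

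The bottom is where the real work lies, and the key observation is that the clique condition is precisely what licenses an application of $\Emb$. Given a clique $S$ in $N=(V,\Pi)$, I would pass to the induced sub-network $N_S=(S,\Pi|_S)$. Because every $u\in S$ ranks all members of $S$ above every outsider, the elements of $S$ occupy the top $|S|$ positions of $\pi_u$ in the same relative order, so restricting to $S$ preserves ranks exactly, i.e.\ $\pi'_u(v)=\pi_u(v)$ for all $u,v\in S$. This is exactly the hypothesis of $\Emb$, which therefore yields $\mathcal{C}(N_S)=\mathcal{C}(N)\cap 2^{S}$. Applying $\WC$ to $N_S$, whose entire vertex set is $S$, gives $S\in\mathcal{C}(N_S)$, and hence $S\in\mathcal{C}(N)$. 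Thus $\mathcal{C}_{cliq}(N)\subseteq\mathcal{C}(N)$ for every $N$, so $\mathcal{C}_{cliq}\le\mathcal{C}$. The main obstacle is recognizing that $\Emb$, with its literal rank-preservation hypothesis, applies to a restricted set exactly when that set is a clique; once this is seen, $\WC$ supplies the membership and the bottom follows with no further computation.
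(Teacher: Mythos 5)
Your argument is correct, but there is nothing in the paper to compare it against: the lemma is stated with a citation to \citep{teng2016itcs} and no proof is given here (the paper's own analogue for the new axioms, Theorem~\ref{theorem_intersection_lattice}, is likewise dispatched by pointing to the intersection lemma and the proof of Theorem~3.1 in the cited work). Your decomposition is the natural one and all three claims check out. Closure under $\cap$ and $\cup$ follows axiom by axiom exactly as you say, the only point needing the distributivity observation being $\Emb$, which is stated as an equality rather than an implication. The top is immediate from the definitions of $\SA$, $\GS$, and $\mathcal{C}_{comp}$. Your bottom argument --- that a clique $S$ occupies the top $|S|$ ranks of $\pi_u$ for every $u\in S$, so the induced sub-network satisfies the literal rank-preservation hypothesis of $\Emb$, after which $\WC$ forces $S\in\mathcal{C}(N_S)=\mathcal{C}(N)\cap 2^S$ --- is the key insight and is exactly how $\Emb$ and $\WC$ are meant to interact; this is where the lemma has actual content. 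The cited work's organization runs the closure argument through the intersection lemma (writing any consistent $\mathcal{C}$ as $\mathcal{C}\cap\mathcal{C}_{comp}$ and checking only the four easy axioms on unions and intersections), which buys a shorter verification at the cost of proving the intersection lemma first; your direct check is more elementary and self-contained. The one pedantic caveat is the empty set: $\emptyset\in\mathcal{C}_{cliq}(N)$ vacuously under the stated definition, yet your $\Emb$/$\WC$ argument cannot produce $\emptyset$ as a community since preference networks are defined on non-empty vertex sets; this is a convention issue inherited from the source material rather than a flaw in your reasoning, but it is worth excluding $\emptyset$ explicitly if you write this up.
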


Due to the lattice structure, given some community functions satisfying all the six axioms,
one can construct more such community functions using a mixture of $\cap$ and $\cup$ operations.

\begin{lemma}[Intersection Lemma, \citep{teng2016itcs}]
For any community function $\mathcal{C}$ satisfying $\A$, $\Mon$, $\Emb$ and $\WC$,
$\mathcal{C}\cap\mathcal{C}_{comp}$ satisfies all the six axioms.
\end{lemma}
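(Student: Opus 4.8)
The plan is to split the six axioms into two groups according to how each interacts with the intersection operation. Write $\mathcal{D}=\mathcal{C}\cap\mathcal{C}_{comp}$, so that $\mathcal{D}(N)=\mathcal{C}(N)\cap\mathcal{C}_{comp}(N)$ for every preference network $N=(V,\Pi)$. The two stability axioms $\SA$ and $\GS$ will come for free from the second factor, while the four structural axioms $\A$, $\Mon$, $\Emb$, $\WC$ will be shown to be inherited by any intersection of community functions that both satisfy them.

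For $\SA$ and $\GS$ I would argue directly from the definition of $\mathcal{C}_{comp}$. If $S\in\mathcal{D}(N)$ then in particular $S\in\mathcal{C}_{comp}(N)$, and by the very definition of $\mathcal{C}_{comp}$ this means $S$ is both self-approving and group-stable with respect to $\Pi$. Hence $\mathcal{D}$ satisfies $\SA$ and $\GS$ at once, and no property of $\mathcal{C}$ is needed here: intersecting with $\mathcal{C}_{comp}$ is precisely the device that enforces stability.

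For the four structural axioms, the key observation is that each of them is closed under intersection, i.e.\ if two community functions $\mathcal{C}_1,\mathcal{C}_2$ satisfy one of these axioms then so does $\mathcal{C}_1\cap\mathcal{C}_2$. For $\WC$ this is immediate, since $V\in\mathcal{C}_1(N)$ and $V\in\mathcal{C}_2(N)$ give $V\in(\mathcal{C}_1\cap\mathcal{C}_2)(N)$. For $\A$ the two-sided membership equivalence survives conjunction-wise: $S\in(\mathcal{C}_1\cap\mathcal{C}_2)(N)$ iff $S\in\mathcal{C}_1(N)$ and $S\in\mathcal{C}_2(N)$, and applying each function's anonymity transfers this to $S'=\sigma(S)$ in $N'$. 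For $\Mon$ the one-way implication propagates through the same conjunction. For $\Emb$ one invokes the set identity $(\mathcal{C}_1(N)\cap 2^{V'})\cap(\mathcal{C}_2(N)\cap 2^{V'})=(\mathcal{C}_1(N)\cap\mathcal{C}_2(N))\cap 2^{V'}$, which yields $(\mathcal{C}_1\cap\mathcal{C}_2)(N')=(\mathcal{C}_1\cap\mathcal{C}_2)(N)\cap 2^{V'}$. Since $\mathcal{C}$ satisfies these four axioms by hypothesis and $\mathcal{C}_{comp}$ is known to satisfy all six axioms, closure under intersection forces $\mathcal{D}$ to satisfy all four.

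I expect no genuine obstacle: the statement reduces to routine verifications. The only point that requires a little care is $\Emb$, whose conclusion is an \emph{equality} of community collections rather than a one-way implication like $\Mon$ or a membership assertion like $\WC$; here one must confirm that the equality itself is preserved under intersection, which is exactly the set identity above. Assembling the two groups---$\SA$ and $\GS$ from the $\mathcal{C}_{comp}$ factor, and $\A$, $\Mon$, $\Emb$, $\WC$ from closure under intersection---shows that $\mathcal{D}=\mathcal{C}\cap\mathcal{C}_{comp}$ satisfies all six axioms, completing the proof.
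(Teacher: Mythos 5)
Your decomposition is exactly the one the paper uses for its own analogue of this lemma (Lemma \ref{intersection_lemma} for $\mathcal{C}_{scomp}$): $\SA$ and $\GS$ come for free from the second factor by definition, and $\A$, $\Mon$, $\Emb$, $\WC$ are each closed under intersection, so it suffices that the stability-defined factor also satisfies them. The closure verifications you sketch ($\WC$ by conjunction, $\A$ and $\Mon$ by propagating membership through both factors, $\Emb$ via the set identity $(\mathcal{C}_1(N)\cap 2^{V'})\cap(\mathcal{C}_2(N)\cap 2^{V'})=(\mathcal{C}_1(N)\cap\mathcal{C}_2(N))\cap 2^{V'}$) are all correct. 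The one place where you and the paper part ways is how the consistency of the stability factor itself is handled: you invoke as a black box that $\mathcal{C}_{comp}$ satisfies all six axioms, whereas the paper's proof of its analogous lemma cannot do this --- there, the fact that $\mathcal{C}_{scomp}$ is consistent is a \emph{consequence} of the intersection lemma (it is what makes $\mathcal{C}_{scomp}$ the top of the lattice), so assuming it would be circular. The paper therefore dismisses $\A$, $\WC$, $\Emb$ as obvious for $\mathcal{C}_{scomp}$ and spends essentially the entire proof verifying the $\Mon$ case directly, i.e.\ showing that strong group-stability (resp.\ self-approval of a singleton) is preserved when the preferences of community members improve. That verification is the genuine mathematical content of the lemma; if you are permitted to cite the consistency of $\mathcal{C}_{comp}$ (which this paper does assert just before stating the lemma), your argument closes, but otherwise you still owe the $\Mon$ and $\Emb$ checks for $\mathcal{C}_{comp}$ itself.
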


By this lemma, designing a community function satisfying all the axioms is reduced to find one satisfying the four axioms, which is relatively
easier to be satisfied.
However, even we have 
a community function satisfying the four axioms, it remains hard to check whether a given set is a community,
since it is computationally hard to check the key axioms $\GS$ and $\SA$.

\begin{theorem}[\citep{teng2016itcs}]\label{theorem_complex_gs}
It is coNP-complete to decide whether a subset $S\subseteq V$ is self-approval or group-stable with respect to a preference profile.
\end{theorem}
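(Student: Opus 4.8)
The plan is to prove coNP-completeness for each of the two predicates (self-approving and group-stable), treating them in parallel since their definitions are structurally identical up to the choice of the deviating coalition. I would establish membership in coNP and hardness separately.

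For membership, I would argue that the negation of each predicate admits a short certificate. A witness that $S$ is \emph{not} self-approving is a set $S'\subseteq V-S$ with $|S'|=|S|$ such that every $s\in S$ group-prefers $S'$ to $S$; a witness that $S$ is \emph{not} group-stable is a pair $(G,G')$ with $\emptyset\neq G\subsetneq S$, $G'\subseteq V-S$, $|G'|=|G|$, and every $s\in S-G$ group-preferring $G'$ to $G$. The only nontrivial part of checking such a certificate is deciding a single group-preference $G'\succ_{\pi_s}G$, which asks for a bijection aligning the elements so that $g'_i\succ_{\pi_s}g_i$ for all $i$. This is a bipartite matching feasibility question solvable in polynomial time: sorting $G$ and $G'$ by their ranks under $\pi_s$ and matching the $i$-th best of each reduces the test to a coordinatewise rank inequality, and a greedy/Hall argument shows this sorted alignment is feasible iff any alignment is. Hence both complements lie in NP and both predicates lie in coNP.

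For coNP-hardness I would give a polynomial reduction from a canonical NP-complete problem (for instance 3-SAT, or an exact-cover/clique formulation) to the complement problem, i.e.\ to the existence of a deviating set. The elements of $V-S$ would serve as \emph{choices} and the members of $S$ as \emph{constraints}: I design each order $\pi_s$ so that, via the rank-dominance characterization above, $S'\succ_{\pi_s}S$ holds exactly when the partial choice encoded by $S'$ satisfies constraint $s$. Because a single $S'$ of the prescribed cardinality must be group-preferred to $S$ by \emph{every} agent simultaneously, a deviating set exists iff all constraints are jointly satisfiable, mirroring the source instance; padding elements in the complement would fix cardinalities and make the ``inactive'' coordinates of each dominance test automatically satisfied.

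I would then verify correctness in both directions and polynomial running time, and adapt the construction to group-stability: since that predicate additionally lets the adversary pick the leaving subset $G\subsetneq S$ and only requires the remaining agents $S-G$ to agree, I would either reuse the self-approval gadget augmented with a dedicated ``anchor'' member of $S$ that can never profitably leave (forcing $G$ to behave like the full coalition) or build a parallel gadget. The hard part, where the construction must be done carefully, is engineering the preference profiles so that the per-agent matching conditions encode exactly the clause structure: the alignment freedom inside each group-preference test tends to admit unintended deviating sets, so the ranks must be laid out with enough buffer elements to pin down which coordinates can be ``improved'' and thereby force the encoding to be faithful. Ruling out spurious deviations of unintended size or shape is the crux of the argument.
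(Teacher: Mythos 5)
First, a point of comparison: the paper does not prove this theorem at all --- it is imported verbatim from \citep{teng2016itcs} and stated without proof --- so there is no in-paper argument to measure yours against. Judged on its own terms, your proposal is half a proof. The coNP-membership half is sound and essentially complete: the complement of each predicate is witnessed by a deviating coalition ($S'$, or the pair $(G,G')$) of polynomial size, and your sorted-alignment characterization of group-preference (align the $i$-th most $\pi_s$-preferred element of $G'$ with the $i$-th most $\pi_s$-preferred element of $G$; an exchange argument shows this alignment succeeds iff some alignment does) makes certificate verification polynomial.

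The hardness half, however, is where the entire content of the theorem lives, and you have not supplied it. What you give is a statement of intent --- ``design each order $\pi_s$ so that $S'\succ_{\pi_s}S$ holds exactly when the choice encoded by $S'$ satisfies constraint $s$'' --- together with your own admission that the crux is unresolved: laying out the ranks with buffer elements so that no \emph{unintended} deviating set of the prescribed cardinality slips through, and taming the extra adversarial freedom in group-stability, where the reduction must also control which $G\subsetneq S$ can leave and only the agents in $S-G$ need agree. Without a concrete gadget, a forward argument that satisfying assignments yield deviating coalitions, and a backward argument that every deviating coalition decodes to a satisfying assignment, there is no reduction; naming 3-SAT or exact cover as the source problem does not substitute for the construction. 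A further detail needing care is that in the self-approval test each $s\in S$ must group-prefer $S'$ to all of $S$ \emph{including $s$ itself}, which constrains where each agent may rank herself in any gadget. As it stands, the proposal correctly identifies the shape of the argument but omits the part that makes it a theorem.
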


 Actually, we find that $\SA$ can degenerate so as to be efficiently checkable,
 keeping the axiom system equivalent to the original one.
 This is because $\GS$ almost implies $\SA$, which immediately follows from the next theorem.

\begin{theorem}\label{GSimplySA}
Given a preference network $N=(V,\Pi)$ and a community $S\subseteq V$ with $|S|\geq 2$, if $S$ is group-stable with respect to $\Pi$, then it is self-approving with respect to $\Pi$.
\end{theorem}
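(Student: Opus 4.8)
The plan is to prove the contrapositive: assuming $S$ is \emph{not} self-approving and $|S|\geq 2$, I will exhibit a witness to the failure of group stability. Unfolding the definition of self-approval, its negation gives a set $S'\subseteq V-S$ with $|S'|=|S|$ such that \emph{every} member $s\in S$ group-prefers $S'$ to $S$, i.e.\ $S'\succ_{\pi_s}S$ for all $s\in S$. My goal is to produce a non-empty $G\subsetneq S$ and a set $G'\subseteq V-S$ with $|G'|=|G|$ such that every $s\in S-G$ group-prefers $G'$ to $G$, which is exactly the negation of group stability.

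The key idea is to make the complement $S-G$ a \emph{single} individual, so that the universally quantified stability condition collapses to one person's preference. Concretely, since $|S|\geq 2$ I fix any $x_0\in S$ and set $G=S\setminus\{x_0\}$; this $G$ is non-empty (because $|S|\geq 2$) and a proper subset of $S$, and $S-G=\{x_0\}$. It then suffices to find $G'\subseteq V-S$ with $|G'|=|G|$ such that $x_0$ \emph{alone} group-prefers $G'$ to $G$.

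Such a $G'$ comes for free from the witness already in hand. Because $x_0$ group-prefers $S'$ to $S$, there is an alignment, i.e.\ a bijection $\phi\colon S\to S'$, with $\phi(x)\succ_{\pi_{x_0}}x$ for every $x\in S$. Restricting $\phi$ to $G$ and setting $G'=\phi(G)\subseteq S'\subseteq V-S$ yields a set of size $|G|$ together with the pairing $\phi(x)\succ_{\pi_{x_0}}x$ for all $x\in G$, which is precisely the statement that $x_0$ group-prefers $G'$ to $G$. Since $S-G=\{x_0\}$, the pair $(G,G')$ blocks group stability, completing the contrapositive.

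I expect the only real subtlety, rather than a genuine obstacle, to be the bookkeeping around the group-preference relation: one must remember that it is witnessed by a bijection, so that deleting one matched pair leaves a valid sub-alignment on the remaining elements, and one must check that $G'\subseteq S'$ stays disjoint from $S$. The crucial modeling insight is the choice of $G$ with a singleton complement, which reduces the ``for all $s\in S-G$'' quantifier to the single member $x_0$ whose preferences we fully control; this is also exactly where the hypothesis $|S|\geq 2$ is consumed, since for $|S|=1$ no proper non-empty $G$ exists and the statement genuinely fails.
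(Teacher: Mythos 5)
Your proof is correct and follows essentially the same route as the paper's: both take $G=S\setminus\{x_0\}$ for a single member $x_0$, so that the ``for all $s\in S-G$'' condition of group stability collapses to $x_0$'s own preference, and both extract $G'$ from the witness $S'$ to the failure of self-approval. The only cosmetic difference is which element of $S'$ gets deleted --- you drop $\phi(x_0)$, the alignment partner of $x_0$, which makes the restricted bijection an immediate witness, whereas the paper drops $x_0$'s least favorite member of $S'$ and implicitly re-aligns; your choice is arguably the cleaner of the two.
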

\begin{proof}
Suppose that $|S|\geq 2$ and $S$ is not self-approving with respect to $\Pi$.
Then there exists $S'\subseteq V\setminus S$ such that for all individuals $s\in S$, $s$ group-prefers $S'$ to $S$. Fix such a $S'$ and arbitrarily choose $s\in S$. Assume that $s'\in S'$ is the least favorite individual in $S'$ by the preference of $s$. We have that $s$ group-prefers $S'-\{s'\}$ to $S-\{s\}$.


Since $|S|\geq 2$, $S-\{s\}$ is a non-empty set. It follows that $S$ is not group-stable with respect to $\Pi$.
\end{proof}

Therefore, we present a degenerate version of $\SA$ so that it only deals with the case where $|S|=1$.
The degenerate version is

$\SAp$: For any preference network $N=(V, \Pi)$, if $\{s\}\in \mathcal{C}(N)$, then $\pi_s(s)=1$.

By Theorem \ref{GSimplySA}, any community function conforms with $\{\GS,\SA\}$ if and only if it conforms with $\{\GS,\SAp\}$.
Considering that $\SAp$ is efficiently checkable, we will use $\SAp$
instead of $\SA$ in the remainder of the paper.

\section{Strengthening axiom $\GS$}\label{sec:poly}

After replacing $\SA$ with $\SAp$, axiom $\GS$ becomes the only difficult-to-check axiom in the axiom system.
It is this intractability that causes difficulty in constructing community functions.
In order to solve this problem, we modify $\GS$ in a natural way, resulting in an efficient-to-check axiom called $\SGS$.
Besides, we show that a community function satisfies $\GS$ if it satisfies $\SGS$ .

\subsection{Weak preferences}
The hardness of checking group stability of a subset $S$ is partially rooted at enumerating equal-sized subsets of $S$ and $V-S$.
A natural idea for tackling this problem is to compare subsets of $S$ directly with $V-S$, rather than with its subsets.
The first technical obstacle is to compare subsets of different sizes,
so we further extend the concept of preference.

\begin{definition}[Weak Preferences]
Suppose $S,S'\subseteq V$, $\pi\in L(V)$ and $k=\min(|S|,|S'|)$.
The preference profile $\pi$ \textit{\NewPrefers} $S$ to $S'$, denoted by $S\succ_{\pi}S'$,
if $\pi$ group-prefers the set of the top $k$ elements of $S$ to that of $S'$.
\end{definition}

Consider Example \ref{example:1} again. We can show that individual $4$ weakly prefers $\{4,1\}$ to $\{2,3,5\}$.
Specifically, since $\{2,3,5\}$ has three individuals and $\{4,1\}$ only has two, select the top two individuals $\{2,5\}$ from $\{2,3,5\}$ (according to the preference of $4$).
The result follows because $4$ group-prefers $\{4,1\}$ to $\{2,5\}$.

It is reasonable to borrow the notation ``$\succ$" from group-preference,
since weak preference is equivalent to group-preference if $|S|=|S'|$.
When $|S|>|S'|$, $S\succ_{\pi} S'$ means that there is $T\subsetneq S$ with $|T|=|S'|$ such that $\pi$ group-prefers $T$ to $S'$.
When $|S|<|S'|$, $S\succ_{\pi} S'$ means that $\pi$ group-prefers $S$ to any $T\subsetneq S'$ with $|S|=|T|$.

The following properties of weak preferences will be frequently used in the rest of this paper.
They immediately follow from the definition of weak preference, so the proofs are omitted.

\begin{lemma}\label{proposition_top_lexicographical_1}
    Given $\pi\in L(V)$, for any $S\subseteq V$ and $S'\subseteq U\subseteq V$ with $|S'|\geq |S|$, if $\pi$ \NewPrefers $S'$ to $S$, then $\pi$ \NewPrefers $U$ to $S$.
\end{lemma}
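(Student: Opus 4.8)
The plan is to unfold both \NewPreference statements, observe that they both reduce to comparing an $|S|$-element ``top block'' with $S$ itself, and then transport the witnessing alignment from $S'$ over to $U$ using only the inclusion $S'\subseteq U$. Writing $m=|S|$, the hypothesis $|S'|\geq|S|$ gives $\min(|S'|,|S|)=m$, so by the definition of \NewPreference the assumption $S'\succ_\pi S$ says exactly that $\pi$ group-prefers the top $m$ elements of $S'$ to the top $m$ elements of $S$, i.e.\ to all of $S$. Since $S'\subseteq U$ forces $|U|\geq|S'|\geq m$, we also have $\min(|U|,|S|)=m$, so the goal $U\succ_\pi S$ is precisely the claim that $\pi$ group-prefers the top $m$ elements of $U$ to $S$. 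Thus both the hypothesis and the goal are group-preferences of an $m$-element top block against $S$, and the whole argument stays among $m$-element sets; no enumeration over subsets is needed.

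The core step is to manufacture a valid alignment for the goal out of the one granted by the hypothesis. Let $T'$ and $T_U$ denote the top $m$ elements of $S'$ and of $U$, respectively. The hypothesis supplies a bijection $\phi:T'\to S$ with $a\succ_\pi\phi(a)$ for every $a\in T'$. Because $T'\subseteq U$, I would build a second bijection $\psi:T_U\to T'$ that dominates element-wise, meaning $b$ is ranked no lower than $\psi(b)$ for every $b\in T_U$; concretely this is the sorted pairing $b_i\mapsto a_i$ of the two blocks listed in decreasing order of preference. Composing, $\phi\circ\psi:T_U\to S$ is again a bijection, and for each $b\in T_U$ the chain $\pi(b)\leq\pi(\psi(b))<\pi(\phi(\psi(b)))$ yields $b\succ_\pi\phi(\psi(b))$, which witnesses that $\pi$ group-prefers $T_U$ to $S$, i.e.\ $U\succ_\pi S$.

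The step that deserves care, and where I expect the only genuine content to sit, is the existence of the dominating bijection $\psi$, i.e.\ monotonicity of order statistics under inclusion: for $T'\subseteq U$ and each $i\leq m$, the $i$-th most preferred element of $U$ is ranked no lower than the $i$-th most preferred element of $T'$. This follows from an elementary counting argument on ranks, namely that the top $i$ elements of $T'$ are already $i$ members of $U$ ranked at least as high as the $i$-th element of $T'$, so $U$ has at least $i$ such elements; here I would use that $\pi$ is a strict total order, so all ranks are distinct and the pairing is unambiguous. Granting $\psi$, the size reduction of the first paragraph together with the composition of the second completes the proof.
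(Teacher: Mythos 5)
Your argument is correct: the key observation that the $i$-th most preferred element of $U$ is ranked at least as high as the $i$-th most preferred element of $S'\subseteq U$, followed by composing the two dominating bijections, is exactly the content the paper has in mind when it states that this lemma ``immediately follows from the definition of weak preference'' and omits the proof. Your write-up simply makes that omitted argument explicit (and correctly isolates the order-statistics monotonicity as the only step with any substance), so there is nothing to flag.
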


\begin{lemma}\label{proposition_top_lexicographical_2}
    Given $\pi\in L(V)$, for any $S, S'\subseteq V$, if $\pi$ \NewPrefers $S'$ to $S$, then for all $T\subseteq S$, $\pi$ \NewPrefers $S'$ to $T$.
\end{lemma}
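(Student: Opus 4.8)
The plan is to reduce \NewPreference to the ranked characterisation of group-preference and then to observe that passing to a subset can only \emph{worsen} the ranked elements. Recall that for a single total order $\pi$ and two equal-sized sets $A$ and $B$, $\pi$ group-prefers $A$ to $B$ if and only if, writing $a_1\succ_\pi\cdots\succ_\pi a_j$ and $b_1\succ_\pi\cdots\succ_\pi b_j$ for the two sets sorted by $\pi$, one has $a_i\succ_\pi b_i$ for every $i$. (The ``if'' direction is immediate by aligning the sorted lists; the ``only if'' direction follows from a standard exchange/pigeonhole argument.) Accordingly I would fix notation: let $m=|S|$, $n=|S'|$, and enumerate $S$ as $s_1\succ_\pi\cdots\succ_\pi s_m$ and $S'$ as $s'_1\succ_\pi\cdots\succ_\pi s'_n$. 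Since $\pi$ \NewPrefers $S'$ to $S$, the hypothesis says exactly that $s'_i\succ_\pi s_i$ for all $1\le i\le k$, where $k=\min(m,n)$.

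Now fix $T\subseteq S$, put $t=|T|$, enumerate $T$ as $\tau_1\succ_\pi\cdots\succ_\pi\tau_t$, and set $k'=\min(n,t)$; the goal is the ranked condition $s'_i\succ_\pi\tau_i$ for all $1\le i\le k'$. Two elementary ingredients do the work. First (subset domination): for every $1\le i\le t$ the elements $\tau_1,\dots,\tau_i$ are $i$ distinct members of $S$, all ranked by $\pi$ no lower than $\tau_i$; hence at least $i$ members of $S$ sit weakly above $\tau_i$, which forces $\tau_i$ to occupy position at least $i$ in the ranking of $S$, i.e.\ $s_i\succeq_\pi\tau_i$ (meaning $s_i=\tau_i$ or $s_i\succ_\pi\tau_i$). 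Second (size bookkeeping): because $t\le m$ we have $k'=\min(n,t)\le\min(n,m)=k$.

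Combining the two is then routine: for each $1\le i\le k'$ we have $i\le k$, so the hypothesis gives $s'_i\succ_\pi s_i$, and $i\le t$, so subset domination gives $s_i\succeq_\pi\tau_i$; transitivity yields $s'_i\succ_\pi\tau_i$. Since $\mathrm{top}_{k'}(S')=\{s'_1,\dots,s'_{k'}\}$ and $\mathrm{top}_{k'}(T)=\{\tau_1,\dots,\tau_{k'}\}$, this is precisely the ranked condition certifying that $\pi$ group-prefers $\mathrm{top}_{k'}(S')$ to $\mathrm{top}_{k'}(T)$, i.e.\ that $\pi$ \NewPrefers $S'$ to $T$. The argument is genuinely routine once the ranked characterisation is in hand, so I do not expect a real obstacle; the only points needing a little care are justifying that characterisation (the exchange argument) and the size bookkeeping $k'\le k$, together with the degenerate case $T=\varnothing$, for which $k'=0$ and the conclusion holds vacuously.
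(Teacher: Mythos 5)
Your proof is correct. The paper actually omits the proof of this lemma entirely (stating that it ``immediately follows from the definition of weak preference''), and your argument is a careful, valid unfolding of exactly that intended routine reasoning: the ranked characterisation of group-preference, the observation that the $i$-th best element of $T\subseteq S$ is ranked no higher than the $i$-th best element of $S$, and the bookkeeping $\min(|S'|,|T|)\le\min(|S'|,|S|)$.
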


In Example \ref{example:1}, suppose $S=\{2,4\}$ and $S'=\{1,3\}$,
then individual $4$ \NewPrefers $S'$ to $S$.
Suppose $U=\{1,3,5\}$ and $T=\{2\}$,
then $S'\subseteq U$ and $T\subseteq S$,
which means that individual $4$ \NewPrefers $U$ to $S$ by Lemma \ref{proposition_top_lexicographical_1}
and \NewPrefers $S'$ to $T$ by Lemma \ref{proposition_top_lexicographical_2}.

\subsection{Alternative to $\GS$}
Now we are ready to define the alternative $\SGS$ (namely strong group stability) to $\GS$.

Consider a preference network $N=(V,\Pi)$.
A set $S\subseteq V$ is called $strongly~group-stable$ with respect to $\Pi$
if for any non-empty $G\subsetneq S$,
there exists $s\in S-G$ which does not weakly prefers $V-S$ to $G$.
Strong group-stability differs from group-stability mainly in that weak preference rather than group-preference is used.

Let's arbitrarily fix a community function $\mathcal{C}$.
\begin{axiom}[$\textbf{SGS}$]
For any preference network $N=(V, \Pi)$ and any subset $S\subseteq V$, if $S\in \mathcal{C}(N)$, then $S$ is strongly group-stable with respect to $\Pi$.
\end{axiom}

Roughly speaking,
$\GS$ rules out a community $S$ if there exists a subset $T\subsetneq S$ such that all members of $S-T$ agree to replace $T$ with a common,
equal-sized group outside of $S$,
while $\SGS$ does so
under a weaker condition that essentially
captures two facts.
First, $T$ can be replaced by a group of a different size.
Second, members of $S-T$ might not have a common replacement of $T$ (in case that $|V-S|>|T|$),
but they agree on kicking $T$ out.
In this case, it is reasonable to rule out such communities; see the following example.

In Example \ref{example:1},
individual $4$ \NewPrefers $\{5,6\}$ to $\{1,2,3\}$,
which means that $\{1,2,3,4\}$ is not strongly group-stable and cannot be a community according to $\SGS$.
This coincides with commonsense,
since $4$ tends to leave $\{1,2,3\}$ and join $\{5,6\}$.
However, $\{1,2,3,4\}$ is group-stable and is a candidate of community in the original axiom system.

\begin{theorem}\label{proposition_gs_sgs_sa_ssa}
For any community function $\mathcal{C}$, if it satisfies $\SGS$, then it also satisfies $\GS$.
\end{theorem}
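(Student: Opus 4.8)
The plan is to prove the contrapositive: assuming a set $S$ is \emph{not} group-stable, I will show it is \emph{not} strongly group-stable, which establishes that $\SGS$ is a stronger condition than $\GS$ (and hence any community function satisfying $\SGS$ automatically satisfies $\GS$). Concretely, suppose $S$ violates group-stability. By definition this means there exist a non-empty $G\subsetneq S$ and a set $G'\subseteq V-S$ with $|G'|=|G|$ such that every $s\in S-G$ group-prefers $G'$ to $G$. My goal is to exhibit the same witness $G$ for a violation of strong group-stability, namely to show that every $s\in S-G$ \emph{weakly} prefers $V-S$ to $G$.

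The key observation is that weak preference is a relaxation of group-preference that is monotone in the ``target'' set, and this is exactly what Lemma \ref{proposition_top_lexicographical_1} captures. First I would note that since $|G'|=|G|$, group-preference and weak preference coincide on the pair $(G',G)$; thus each $s\in S-G$ weakly prefers $G'$ to $G$. Next, because $G'\subseteq V-S$ and $|V-S|\geq|G'|=|G|$, I can invoke Lemma \ref{proposition_top_lexicographical_1} with the roles $S\mapsto G$, $S'\mapsto G'$, and $U\mapsto V-S$: since $s$ weakly prefers $G'$ to $G$ and $G'\subseteq(V-S)$ with $|G'|\geq|G|$, it follows that $s$ weakly prefers $V-S$ to $G$. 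As this holds for the arbitrary $s\in S-G$ chosen above, no $s\in S-G$ fails to weakly prefer $V-S$ to $G$, so the strong group-stability condition fails for this very $G$. Therefore $S$ is not strongly group-stable, completing the contrapositive.

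I do not expect a serious obstacle here, since the heavy lifting is done by Lemma \ref{proposition_top_lexicographical_1}; the proof is essentially an application of that lemma after recognizing that the group-preference witness for a $\GS$-violation is already a weak-preference witness. The one point requiring a little care is the bookkeeping of which set plays which role in the lemma and verifying the size condition $|G'|\geq|G|$ (here it is equality, $|G'|=|G|$, which suffices), so that the lemma applies cleanly to conclude weak preference of the larger set $V-S$ over $G$. It is also worth confirming that the non-emptiness requirement matches: $\GS$ quantifies over non-empty $G\subsetneq S$ and so does $\SGS$, so the witness $G$ is admissible on both sides and no edge case involving the empty set or $G=S$ arises.
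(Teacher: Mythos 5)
Your proof is correct and follows essentially the same route as the paper: both take a witness $(G,G')$ for a group-stability violation, observe that group-preference with $|G'|=|G|$ is weak preference, and apply Lemma~\ref{proposition_top_lexicographical_1} to lift the preference from $G'$ to the superset $V-S$, yielding a violation of strong group-stability. The only cosmetic difference is that you phrase it as a contrapositive on sets while the paper phrases it as a contradiction at the level of community functions.
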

\begin{proof}
If some community function $\mathcal{C}$ satisfies $\SGS$ but does not satisfy $\GS$,
there must exist a preference network $N=(V,\Pi)$ and a community $S\in \mathcal{C}(N)$ which is strongly group-stable but not group-stable with respect to $\Pi$.
By the definition of group stability, there are two non-empty subsets $T\subsetneq S$ and $T'\subseteq V-S$
such that $|T'|=|T|$ and $u$ group-prefers $T'$ to $T$ for all individuals $u \in S-T$.
By Lemma \ref{proposition_top_lexicographical_1}, $u$ \NewPrefers $V-S$ to $T$, meaning that $S$ is not strongly group-stable. A contradiction is reached.
\end{proof}

\section{Complexity and lattice}\label{Complexity_and_lattice}
This section shows that the lattice structure, the intersection lemma, and richness of the axiom system in \cite{teng2016itcs} still hold in our axiom system.
More importantly, we show that it takes polynomial time to check whether a given community satisfies $\SGS$.

\subsection{Intersection lemma and lattice structure}

Let  $\mathcal{A}=\{\A, \Mon, \WC, \Emb, \GS, \SAp\}$ and $\mathcal{SA}=\{\A, \Mon, \WC, \Emb, \SGS, \SAp\}$.
For a collection $\mathcal{X} \in \{\mathcal{A},\mathcal{SA}\}$ of axioms, a community function is said to be $\mathcal{X}$-consistent if it conforms with all axioms in $\mathcal{X}$.

We go on with showing that many good properties of $\mathcal{A}$ are preserved by $\mathcal{SA}$.
An example is the intersection lemma, one of the most important properties of $\mathcal{A}$.

Before introducing the intersection lemma,
we define the intersection between community functions.
Given two community functions $\mathcal{C}_1$ and $\mathcal{C}_2$,
we define $\mathcal{C}_1\cap\mathcal{C}_2$ as the community function that 
$\left(\mathcal{C}_1\cap\mathcal{C}_2\right)\left(N\right) = \mathcal{C}_1(N)\cap\mathcal{C}_2(N)$.
Similarly, one can define $\left(\mathcal{C}_1\cup\mathcal{C}_2\right)\left(N\right) = \mathcal{C}_1(N)\cup\mathcal{C}_2(N)$.

Then we define three community functions that will be used: $\mathcal{C}_{SGS},\mathcal{C}_{SA'}$, and $\mathcal{C}_{scomp}$,
which are the maximum community functions satisfying $\SGS$, $\SAp$, and both of them, respectively.
Formally, given a preference network  $N=(V,\Pi)$, $$\mathcal{C}_{SGS}(N)=\{S\subseteq V: S\textrm{ is strongly group-stable}\},$$
$$\mathcal{C}_{SA'}(N)=\{S\subseteq V: \textrm{ if }S=\{s\} \textrm{ then } \pi_s(s)=1\}, \textrm{ and }$$  $$\mathcal{C}_{scomp}=\mathcal{C}_{SGS}\cap\mathcal{C}_{SA'}.$$

\begin{lemma}[Intersection Lemma]\label{intersection_lemma}
If a community function $\mathcal{C}$ satisfying any axiom $X\in\{\A, \Mon, \WC, \Emb\}$, then $\mathcal{C}\cap\mathcal{C}_{scomp}$ satisfies $X$, $\SGS$ and $\SAp$.
\end{lemma}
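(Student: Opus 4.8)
The plan is to reduce the statement to two facts: that $\mathcal{C}_{scomp}$ itself conforms with every axiom in $\{\A, \Mon, \WC, \Emb\}$ (in addition to $\SGS$ and $\SAp$, which it satisfies by construction), and that each of $\A, \Mon, \WC, \Emb$ is preserved under the intersection operation $\cap$. Granting these, the lemma is immediate. Since $\mathcal{C}$ satisfies the given axiom $X$ and $\mathcal{C}_{scomp}$ also satisfies $X$, preservation under $\cap$ yields that $\mathcal{C}\cap\mathcal{C}_{scomp}$ satisfies $X$. For $\SGS$ and $\SAp$ no hypothesis on $\mathcal{C}$ is needed: for every $N$ we have $(\mathcal{C}\cap\mathcal{C}_{scomp})(N)\subseteq\mathcal{C}_{scomp}(N)=\mathcal{C}_{SGS}(N)\cap\mathcal{C}_{SA'}(N)$, so every community of the intersection is strongly group-stable and every singleton community $\{s\}$ in it satisfies $\pi_s(s)=1$, which is exactly $\SGS$ and $\SAp$.

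I would first dispatch the routine half, namely that $\A, \WC, \Mon, \Emb$ each survive $\cap$. For $\A$, membership in $\mathcal{C}_1\cap\mathcal{C}_2$ is a conjunction of two permutation-invariant conditions; for $\WC$, $V$ lies in both $\mathcal{C}_1(N)$ and $\mathcal{C}_2(N)$; for $\Mon$, the monotonicity hypothesis transports each conjunct separately; and for $\Emb$ one distributes the cut, using $(\mathcal{C}_1(N)\cap 2^{V'})\cap(\mathcal{C}_2(N)\cap 2^{V'})=(\mathcal{C}_1\cap\mathcal{C}_2)(N)\cap 2^{V'}$. I would then verify the easy axioms for $\mathcal{C}_{scomp}$ directly: strong group-stability and the singleton condition depend only on the preference profile and not on vertex labels, giving $\A$; and $V$ is strongly group-stable because its complement is empty, so no nonempty $G\subsetneq V$ can have any member of $V-G$ weakly prefer $V-V=\emptyset$ to $G$, giving $\WC$.

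The substantive work is $\Mon$ and $\Emb$ for $\mathcal{C}_{scomp}$. For $\Mon$, I would show that raising the ranks of members of $S$ while preserving their mutual order can only make each potential defector $s\in S-G$ \emph{less} willing to weakly prefer $V-S$ to $G$; the technical content is to control how an improving reordering affects the top-$k$ set underlying weak preference, which I expect to follow from Lemma \ref{proposition_top_lexicographical_2}.

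The main obstacle is $\Emb$. The key structural observation is that the embedding hypothesis $\pi'_u(v)=\pi_u(v)$ for all $u,v\in V'$ is much stronger than agreement of relative order: since $\pi'_u$ is a bijection onto $\{1,\ldots,|V'|\}$, the set $V'$ must occupy exactly the top $|V'|$ ranks of $\pi_u$ for every $u\in V'$, so each member of $V'$ prefers all of $V'$ to all of $V-V'$. Using this, I would establish, for $S\subseteq V'$, the equivalence of strong group-stability in $N'$ and in $N$. One direction is comparatively clean: if $S$ is strongly group-unstable in $N$, the witnessing defection is realized by a replacement drawn from the top of $V-S$, which by the top-ranking structure lies in $V'-S$, so Lemmas \ref{proposition_top_lexicographical_1} and \ref{proposition_top_lexicographical_2} let me reproduce the same instability against $V'-S$ inside $N'$. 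The delicate direction is lifting instability from $N'$ up to $N$, where the ``outside'' enlarges from $V'-S$ to $V-S$: because weak preference is monotone in the preferred set only when that set is at least as large as $G$ (Lemma \ref{proposition_top_lexicographical_1}), the argument must be organized around the size relation between $|V'-S|$ and $|G|$. Ensuring that padding the outside with the uniformly-dispreferred elements of $V-V'$ does not disturb the weak preference witnessing instability is, I expect, the crux of the entire proof, and the step demanding the most care.
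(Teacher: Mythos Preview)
Your plan mirrors the paper's exactly: reduce to showing that $\mathcal{C}_{scomp}$ satisfies each of $\A,\Mon,\WC,\Emb$ and that each of these survives $\cap$. The paper simply asserts that $\A,\WC,\Emb$ hold for $\mathcal{C}_{scomp}$ ``obviously'' and spells out only $\Mon$ (via essentially the contrapositive argument you sketch); you instead, and rightly, single out $\Emb$ as the delicate point.

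Unfortunately the step you flag cannot be completed: $\mathcal{C}_{scomp}$ does \emph{not} satisfy $\Emb$, so the lemma as stated fails for $X=\Emb$ (and the paper's ``obviously'' hides the same gap). Take $V'=\{1,\ldots,5\}\subset V=\{1,\ldots,10\}$, $S=\{2,3,4,5\}$, and $\pi_2=[2,3,1,4,5,6,\ldots,10]$, $\pi_3=[3,2,1,4,5,6,\ldots,10]$, $\pi_4=[4,5,1,2,3,6,\ldots,10]$, $\pi_5=[5,4,1,2,3,6,\ldots,10]$ (with $\pi_1$ any order placing $V'$ on top). In $N'$ the choice $G=\{2,3\}$ destroys strong group-stability, since both $4$ and $5$ weakly prefer $V'-S=\{1\}$ to $\{2,3\}$. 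In $N$, however, $S$ \emph{is} strongly group-stable: for each singleton $G=\{g\}$ some $s\in S-G$ ranks $g$ above $1$ (the top of $V-S$), while for $|G|\ge 2$ the top $|G|$ elements of $V-S$ under any $\pi_s$ with $s\in S$ consist of $1$ together with members of $\{6,\ldots,10\}$, all of which sit below every element of $G\subseteq V'$, so no $s$ can weakly prefer $V-S$ to $G$. Thus $S\in(\mathcal{C}_{scomp}(N)\cap 2^{V'})\setminus\mathcal{C}_{scomp}(N')$. The mechanism is precisely the one you worried about: when $|G|>|V'-S|$, padding the outside by the uniformly-dispreferred $V-V'$ \emph{does} destroy the weak preference, and there is no way to shrink the witnessing $G$ down to size at most $|V'-S|$ while retaining unanimity on $S-G$.
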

\begin{proof}
Let $\tilde{\mathcal{C}}=\mathcal{C}\cap\mathcal{C}_{scomp}$.
If $X\in\{\A, \WC, \Emb\}$, $\tilde{\mathcal{C}}$ satisfies $X$ because $\mathcal{C}_{scomp}$ is obviously $\{\A, \WC, \Emb\}$-consistent. Hence, we just consider the case where $X=\Mon$.

Assume that preference networks $N=(V,\Pi)$, $N'=(V,\Pi')$, and community $S\in\tilde{\mathcal{C}}(N)$ satisfy the condition of $\Mon$, namely, for all $s,u\in S$ and $v\in V$, if $u\succ_{\pi_s}v$ then $u\succ_{\pi'_s}v$. 
The rest of the proof consists of two steps.

Step 1: Since $\mathcal{C}$ satisfies $\Mon$ and $S\in \mathcal{C}(N)$, we immediately have $S\in\mathcal{C}(N')$.

Step 2: We prove that $S\in\mathcal{C}_{scomp}(N')$ as following.

If $S$ is a singleton $\{s\}$, then $S\in \mathcal{C}_{scomp}(N)$ implies that $\pi_s(s)=1$, namely $s\succ_{\pi_s} u$ for any $u\neq s$. Hence, $s\succ_{\pi'_s} u$ for any $u\neq s$ and we have $S\in \mathcal{C}_{scomp}(N')$.

When $|S|>1$, $S\in \mathcal{C}_{scomp}(N')$ if and only if $S\in\mathcal{C}_{SGS}(N')$, so we only have to prove that $S\in\mathcal{C}_{SGS}(N')$. For contradiction, suppose $S\notin\mathcal{C}_{SGS}(N')$.
Then there exists a non-empty set $T\subsetneq S$ such that $V-S\succ_{\pi'_s}T$ for all $s\in S-T$. Thus, for all $s\in S-T$ and $1\leq i\leq k$ where $k=\min\{|T|, |V-S|\}$, the top-$i$th individual $u_i\in V-S$ and the top-$i$th individual $v_i\in T$ satisfy $u_i\succ_{\pi'_s} v_i$ . Recalling the assumption about $N,N',S$, we have $u_i\succ_{\pi_s} v_i$ for all $1\leq i\leq k$ and  $s\in S-T$, meaning that $V-S\succ_{\pi_s}T$ for all $s\in S-T$. This indicates that $S\notin\mathcal{C}_{SGS}(N)$. A contradiction is reached. Therefore, $S\in\mathcal{C}_{scomp}(N')$.

Altogether, we have shown that if $N,N'$ and $S\in\tilde{\mathcal{C}}(N)$ satisfy the condition of $\Mon$, then $S\in\tilde{\mathcal{C}}(N')$. This means that $\tilde{\mathcal{C}}$ satisfies $\Mon$, and the lemma follows immediately.
\end{proof}

In some sense,
the critical role of the intersection lemma lies in that it provides a normal form of consistent community functions:
the intersection of a simple community function with $\mathcal{C}_{scomp}$.
Hence finding a $\mathcal{SA}$-consistent community function is reduced to finding a $\{\A, \Mon, \WC, \Emb\}$-consistent one.
This reduction will greatly help us construct desirable community functions, as shown in Section \ref{subsection3.4}.
The intersection lemma also leads to the lattice structure of $\mathcal{SA}$-consistent community functions,
which is also true for $\mathcal{A}$ and is one of the most striking results in \citep{teng2016itcs}.

\begin{theorem}\label{theorem_intersection_lattice}
Let $\mathbb{C}$ be the class of all $\mathcal{SA}$-consistent community functions.
The algebraic structure $\mathcal{T} = \{\mathbb{C}, \cup, \cap\}$
forms a bounded lattice whose top and bottom are $\mathcal{C}_{scomp}$ and $\mathcal{C}_{cliq}$, respectively.
\end{theorem}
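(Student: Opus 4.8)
The plan is to verify the defining axioms of a bounded lattice in stages: first that $\mathbb{C}$ is closed under $\cup$ and $\cap$, then that these operations satisfy the lattice identities, and finally that $\mathcal{C}_{scomp}$ and $\mathcal{C}_{cliq}$ are respectively the top and bottom. The overall strategy mirrors the original lattice proof in \citep{teng2016itcs}, but every step must be re-examined because $\GS$ has been replaced by $\SGS$ and $\SA$ by $\SAp$.

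**Closure under the operations.** The first and most substantive step is to show that if $\mathcal{C}_1, \mathcal{C}_2 \in \mathbb{C}$ then both $\mathcal{C}_1 \cap \mathcal{C}_2$ and $\mathcal{C}_1 \cup \mathcal{C}_2$ lie in $\mathbb{C}$. I would handle the six axioms in $\mathcal{SA}$ one at a time. The axioms $\A$, $\WC$, and $\Emb$ are preserved by both operations almost immediately from their definitions, since membership conditions combine cleanly under pointwise $\cap$ and $\cup$. For the ``restrictive'' axioms $\SGS$ and $\SAp$ — which merely require that every community be strongly group-stable, resp.\ that every singleton community $\{s\}$ have $\pi_s(s)=1$ — closure under both $\cap$ and $\cup$ follows because any set in $(\mathcal{C}_1 \cup \mathcal{C}_2)(N)$ is already a community of one of the $\mathcal{C}_i$, hence inherits the required property; for $\cap$ the membership is even more restricted. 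The one axiom that needs genuine argument is $\Mon$: I must verify that the monotonicity condition, which is a closure-type (expansive) requirement, survives both set operations. For $\cup$ this is direct; for $\cap$ I would either argue directly that $S \in (\mathcal{C}_1 \cap \mathcal{C}_2)(N)$ and the $\Mon$ hypothesis force $S \in \mathcal{C}_i(N')$ for each $i$, or simply invoke Lemma \ref{intersection_lemma} with the observation that $\mathcal{C}_1 \cap \mathcal{C}_2 = (\mathcal{C}_1 \cap \mathcal{C}_{scomp}) \cap \mathcal{C}_2$ and reuse the Step-2 argument already carried out there.

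**Lattice identities and bounds.** Once closure is established, the associativity, commutativity, idempotence, and absorption laws for $\cup$ and $\cap$ hold automatically, because these are the ordinary set operations applied pointwise at each network $N$, so they are inherited from the Boolean lattice $(2^{2^V}, \cup, \cap)$. For the bounds I would show $\mathcal{C}_{cliq} \subseteq \mathcal{C} \subseteq \mathcal{C}_{scomp}$ for every $\mathcal{C} \in \mathbb{C}$, under the pointwise-inclusion order $\mathcal{C} \preceq \mathcal{C}'$ iff $\mathcal{C}(N) \subseteq \mathcal{C}'(N)$ for all $N$. The upper bound $\mathcal{C} \preceq \mathcal{C}_{scomp}$ is immediate: any $\mathcal{SA}$-consistent $\mathcal{C}$ puts only strongly-group-stable sets (and, for singletons, only top-ranked ones) into $\mathcal{C}(N)$, which is exactly the defining membership of $\mathcal{C}_{scomp} = \mathcal{C}_{SGS} \cap \mathcal{C}_{SA'}$. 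For the lower bound I would verify that $\mathcal{C}_{cliq}$ is itself $\mathcal{SA}$-consistent (using Theorem \ref{proposition_gs_sgs_sa_ssa} and the fact that $\mathcal{C}_{cliq}$ is known to satisfy the original axioms, or a direct check that every clique is strongly group-stable) and that any clique of $N$ is forced into $\mathcal{C}(N)$ by the axioms — the standard argument being that $\Emb$ plus $\WC$ make a clique a world-community of its induced subnetwork and hence a community of $N$.

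**Anticipated obstacle.** The main technical hurdle is the lower-bound containment $\mathcal{C}_{cliq} \preceq \mathcal{C}$, i.e.\ proving that \emph{every} $\mathcal{SA}$-consistent community function must contain all cliques. This is the one place where the argument is not purely formal: it requires combining $\WC$, $\Emb$, and $\Mon$ to ``grow'' a clique into a world community, and I must confirm that replacing $\GS, \SA$ by $\SGS, \SAp$ does not break this derivation. I expect this to go through because the construction relies only on the four non-restrictive axioms $\{\A, \Mon, \WC, \Emb\}$, which are untouched by our modification; nonetheless, verifying that $\mathcal{C}_{cliq}$ satisfies $\SGS$ (rather than merely $\GS$) is a point that demands explicit attention, since $\SGS$ is strictly stronger than $\GS$ by the discussion following the definition.
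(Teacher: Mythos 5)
Your proposal is correct and follows essentially the same route the paper intends: the paper's own ``proof'' merely defers to the intersection lemma (Lemma \ref{intersection_lemma}) and the lattice argument of Theorem~3.1 in \citep{teng2016itcs}, and your outline fills in exactly those details (pointwise closure under $\cup$ and $\cap$, inherited lattice identities, $\mathcal{C}\preceq\mathcal{C}_{scomp}$ from the restrictive axioms, and $\mathcal{C}_{cliq}\preceq\mathcal{C}$ via $\WC$ plus $\Emb$). You also correctly flag the two points that genuinely need re-checking after the axiom replacement --- that $\mathcal{C}_{cliq}$ satisfies the stronger $\SGS$ (it does, since every member of a clique prefers every clique member to everyone outside) and that $\Mon$ survives intersection --- so no gap remains.
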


The proof of Theorem \ref{theorem_intersection_lattice} is straightforward by the intersection lemma (Lemma \ref{intersection_lemma})
and the proof of Theorem 3.1 in \cite{teng2016itcs}, so the detail is omitted.

The significance of the lattice structure lies in that given some $\mathcal{SA}$-consistent community functions,
one can produce more such community functions by a mixture of $\cap$ and $\cup$ operations.

According to the lattice structure, the richness of $\mathcal{SA}$-consistent community functions
is to some extent determined by the difference between $\mathcal{C}_{scomp}$ and $\mathcal{C}_{cliq}$.
Actually, the difference is significant, with evidence from the number of communities.
On the one hand, for any preference network $N$, the size of $\mathcal{C}_{cliq}(N)$ is linear,
by Lemma \ref{proposition_clique_num}. On the other hand, an example inspired by \citep{teng2016itcs} indicates that the size $\mathcal{C}_{scomp}(N)$
can be exponential for some preference network $N$.

\begin{lemma}\label{richness}
For any positive integer $n$, there exists a preference network $N=(V,\Pi)$ with $|V|=n$ such that $|\mathcal{C}_{scomp}(N)|\geq 2^{\lfloor n/2 \rfloor}$.
\end{lemma}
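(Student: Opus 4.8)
The plan is to exhibit one preference network on $n$ individuals together with an explicit family of $2^{\lfloor n/2\rfloor}$ sets, and to prove that every member of this family lies in $\mathcal{C}_{scomp}(N)$. First I would split $V$ into a \emph{core} $K$ of size $\lceil n/2\rceil$ and a set $O=\{x_1,\dots,x_m\}$ of $m=\lfloor n/2\rfloor$ \emph{optional} individuals, and take the family $\mathcal{F}=\{\,K\cup X_T : T\subseteq\{1,\dots,m\}\,\}$ where $X_T=\{x_i:i\in T\}$. Since distinct $T$ yield distinct sets, $|\mathcal{F}|=2^{m}=2^{\lfloor n/2\rfloor}$, so it suffices to show $\mathcal{F}\subseteq\mathcal{C}_{scomp}(N)$. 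The profile I would use makes every individual rank all core members above all optional members; moreover, since $|K|=\lceil n/2\rceil\ge m$, I designate $m$ distinct core members $c_1,\dots,c_m$ and let each $c_i$ place $x_i$ at the top of its optional block (a ``champion'' for $x_i$), leaving all other orderings arbitrary.

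Membership $S_T\in\mathcal{C}_{SA'}$ is immediate for $n\ge 3$, since $|S_T|\ge|K|\ge 2$ prevents $S_T$ from being a singleton (the cases $n\le 2$ are checked by hand). The substance is to verify that each $S_T$ is strongly group-stable, i.e.\ that for every non-empty $G\subsetneq S_T$ some $s\in S_T-G$ does not \NewPrefer $V-S_T$ to $G$, and I would split on whether $G$ meets the core. If $G$ contains a core element, any $s\in S_T-G$ works: as $s$ ranks the core above $O$ while $V-S_T=X_{\bar T}\subseteq O$, the most-preferred element of $G$ for $s$ is a core member and beats the most-preferred element of $V-S_T$ for $s$, so weak preference already fails at the top coordinate. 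If instead $G\subseteq O$, put $B=\{i:x_i\in G\}\ne\emptyset$, pick $i_0\in B$, and use the champion $c_{i_0}\in K\subseteq S_T-G$: its favourite optional is $x_{i_0}\in G$, while $V-S_T\subseteq O\setminus\{x_{i_0}\}$, so again the top element of $G$ for $c_{i_0}$ beats that of $V-S_T$ and $c_{i_0}$ does not \NewPrefer $V-S_T$ to $G$.

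The step I expect to be the crux is this second case, $G\subseteq O$, where a defender must prefer an \emph{internal} group of optionals to the external ones. This is exactly where over-simple constructions fail: if only a bounded defender pool were available for each such $G$ (say, pairing each $x_i$ with a single partner), two fixed orders cannot simultaneously champion an arbitrary ``middle'' optional against all the others, and some $S_T$ would cease to be strongly group-stable once $m$ is moderately large. My construction avoids this by supplying a \emph{large, diverse} pool of defenders: the entire core lies in $S_T-G$, and the $m$ champions guarantee that whichever optional occurs in $G$ has a dedicated advocate inside $S_T$; making this counting go through is precisely why the core is taken of size $\lceil n/2\rceil\ge m$. Granting the two cases, every $S_T$ is strongly group-stable and hence lies in $\mathcal{C}_{scomp}(N)$, which gives $|\mathcal{C}_{scomp}(N)|\ge 2^{\lfloor n/2\rfloor}$.
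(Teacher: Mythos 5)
Your proof is correct and takes essentially the same approach as the paper's: your core/optional split is its ``heroes''/``sidekicks'' construction, your champions $c_i$ play exactly the role of the heroes $h_i$ who rank $s_i$ first among sidekicks, and your two-case stability argument ($G$ meets the core versus $G$ contained in the optionals) mirrors the paper's argument that any blocking set $T$ must consist of sidekicks and is then defended by the corresponding hero. The only (harmless) difference is that you take the core of size $\lceil n/2\rceil$, which handles odd $n$ with $|V|=n$ exactly.
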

\begin{proof}
Let $m=\lfloor n/2 \rfloor$. Let $H=\{h_1, h_2, ...,h_m\}$ be a set of $m$ heroes, $S=\{s_1, s_2, ..., s_m\}$ be a set of $m$ sidekicks, and $V=H\cup S$.
For all $1\leq i\leq m$, define preferences $\pi_{h_i}=\pi_{s_i}=[h_i, h_1, h_2, ..., h_m, s_i, s_1, ...s_{i-1},s_{i+1}, ..., s_m]$. Let $\Pi$ be the preference profile consisting of all $\pi_{h_i}$'s and $\pi_{s_i}$'s. Consider the preference network $N=(V,\Pi)$. Arbitrarily choose a subset $U\subseteq S$.
We will prove that $H\cup U\in\mathcal{C}_{SGS}(N)$.

For contradiction, suppose $H\cup U\notin \mathcal{C}_{SGS}(N)$. Then there exists a non-empty set $T\subsetneq H\cup U$ such that $V-(H\cup U)\succ_{\pi_w}T$ for all $w\in (H\cup U)-T$.
Since any individual in $H$ is always preferred to any individual in $S$, it must hold that $T\subset U\subset S$, implying that $H\subseteq (H\cup U)-T$. Therefore, $V-(H\cup U)\succ_{\pi_h}T$ for any $h\in H$.
Now choose $1\leq i\leq m$ such that $s_i\in U$. We have $V-(H\cup U)\succ_{\pi_{h_i}}T$, contradictory to the fact that $s_i\succ_{\pi_{h_i}} v$ for any $v\in V-(H\cup U)\subseteq S$. Hence $H\cup U\in\mathcal{C}_{SGS}(N)$.

Since $|H\cup U|>1$,
by the definition of $\mathcal{C}_{scomp}$ and $\mathcal{C}_{\SAp}$, $H\cup U\in\mathcal{C}_{SGS}(N)$ implies that $H\cup U\in\mathcal{C}_{scomp}(N)$.
Because there are $2^m$ different $U$'s, $|\mathcal{C}_{scomp}(N)|\geq 2^m$.
\end{proof}

Since the size of $\mathcal{C}_{cliq}(N)$ is at most linear (proved in Lemma \ref{proposition_clique_num}),
Lemma \ref{richness} suggests that the collection of $\mathcal{SA}$-consistent community functions may be rich and the lattice may be far from collapsing.
The above lemma also implies that the though $\SGS$ is stronger than  $\GS$, the restriction is not too much.

\subsection{$\SGS$ is efficient to check}
Now we present the most important property of $\SGS$, that it can be efficiently checked. 
This is shown constructively, through an algorithm inspired by the following observation.

Consider a preference network $N=(V,\Pi)$ and a subset $S\subsetneq V$.
Suppose that $S\notin \mathcal{C}_{SGS}(N)$.
There must exist a non-empty set $T\subsetneq S$ that
all individuals in $S-T$ \NewPrefer $V-S$ to $T$.
If we extend any set $U\subseteq T$ by adding any $u\in S$ that does not \NewPrefer $V-S$ to $U$,
Lemma \ref{proposition_top_lexicographical_2} indicates that
no individual in $S-T$ can be added since all individuals in $S-T$ \NewPrefer $V-S$ to $U$.
So, if we happen to start with $U=\{u\}$ for some $u\in T$ and extend $U$ step-by-step,
the process must stop before $U=S$.

Based on this observation, we design an algorithm which for each individual $u\in S$,
initializes $U$ to be $\{u\}$, and iteratively extends $U$ by adding an individual in $S$ that does not \NewPrefer $V-S$ to $U$.
If and only if we get some $U\subsetneq S$ that cannot be extended any more, decide $S\notin \mathcal{C}_{SGS}(N)$.
The details are specified in Algorithm \ref{algorithm:dec_sgs_in}.

\begin{algorithm}
    \caption{Decide membership of $\mathcal{C}_{SGS}(N)$}
    \label{algorithm:dec_sgs_in}
    \textbf{Input} A preference network $(V,\Pi)$, a subset $S\subsetneq V$\\
    \textbf{Output} A boolean value indicating whether $S\in \mathcal{C}_{SGS}(N)$
    \begin{algorithmic}[1]
    \Procedure{Decide}{$V, \Pi, S$}
        \For{$u\in S$}\label{deter_alg_tgs_traversal}
            \State $U\gets\{u\}$\label{init}
            \While{$U\not = S$} \label{algorithm:sgs_deter_while}
                \If{$\exists s\in S-U$ such that $V-S\nsucc_{\pi_s}U$}      \label{algorithm:if}
                       \State Arbitrarily choose such an $s$
                    \State $U \gets U\cup\{s\}$     \label{algorithm:deter_add_s}
                \Else
                    \State \Return $False$ \label{algorithm:returnfalse}
                \EndIf
            \EndWhile
        \EndFor
        \State \textbf{return} $True$ \label{algorithm:deter_return}
    \EndProcedure
    \end{algorithmic}
\end{algorithm}

We illustrate how Algorithm \ref{algorithm:dec_sgs_in} works using the preference network in Example \ref{example:1}.
Let $S$ be $\{1,2,3,4\}$.
At the beginning, $u$ is $1$ at Line \ref{deter_alg_tgs_traversal} and $U=\{1\}$.
Individual $2$ is added to $U$ at Line \ref{algorithm:deter_add_s} since
individual $2$ does not \NewPrefer $V-S=\{5\}$ to $U$.
Then $U=\{1,2\}$,
and individual $3$ is added to $U$ since individual $3$ does not \NewPrefer $V-S=\{5\}$ to $U$. Now $U=\{1,2,3\}$ and $S-U=\{4\}$.
Because individual 4 \NewPrefers $\{5\}$ to $U$,
the algorithm returns $False$ at Line \ref{algorithm:returnfalse}.

Now we prove the correctness of Algorithm \ref{algorithm:dec_sgs_in} and show that its running time is $O(|S|^4)$.

\begin{lemma}\label{lemma_deter_tgs}
    Given a preference network $N=(V, \Pi)$ and $S\subseteq V$, it is in time $O\left(|S|^4\right)$ to decide whether $S\in \mathcal{C}_{SGS}(N)$.
\end{lemma}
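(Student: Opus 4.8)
The plan is to establish two things about Algorithm \ref{algorithm:dec_sgs_in}: that it correctly decides membership in $\mathcal{C}_{SGS}(N)$, and that it runs in time $O(|S|^4)$ (we may assume $S\subsetneq V$, the case $S=V$ being forced into $\mathcal{C}_{SGS}(N)$ and handled trivially). For correctness I would argue soundness and completeness separately. Soundness is the easy direction: whenever the algorithm returns $False$ at Line \ref{algorithm:returnfalse}, the current set $U$ is non-empty (it was initialized to a singleton at Line \ref{init} and only grows) and satisfies $U\neq S$ (the while-condition at Line \ref{algorithm:sgs_deter_while} holds), and the failure of the condition at Line \ref{algorithm:if} means every $s\in S-U$ satisfies $V-S\succ_{\pi_s}U$, i.e. every member of $S-U$ \NewPrefers $V-S$ to $U$. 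Taking $T=U$ then exhibits a non-empty proper subset witnessing that $S$ is not strongly group-stable, so $S\notin\mathcal{C}_{SGS}(N)$.

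The crux is completeness: if $S\notin\mathcal{C}_{SGS}(N)$, then the algorithm returns $False$. I would fix a witnessing non-empty $T\subsetneq S$ for which every $s\in S-T$ \NewPrefers $V-S$ to $T$, and focus on the iteration of the outer loop (Line \ref{deter_alg_tgs_traversal}) seeded from some $u\in T$. The key is the invariant that $U\subseteq T$ persists throughout this iteration. Initially $U=\{u\}\subseteq T$. Whenever an element $s$ is added at Line \ref{algorithm:deter_add_s}, it was chosen with $V-S\nsucc_{\pi_s}U$; but if $s\in S-T$ then $V-S\succ_{\pi_s}T$, and since $U\subseteq T$, Lemma \ref{proposition_top_lexicographical_2} would give $V-S\succ_{\pi_s}U$, a contradiction. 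Hence $s\in T$ and the invariant is preserved. Because $U\subseteq T\subsetneq S$, the condition $U\neq S$ never fails, yet $U$ grows by one element on each successful addition, so after at most $|T|-1$ additions no further addable element exists and control reaches Line \ref{algorithm:returnfalse}. Thus this iteration returns $False$; since any return exits the procedure, the algorithm returns $False$. Together with soundness, this shows the algorithm outputs $True$ exactly when $S\in\mathcal{C}_{SGS}(N)$.

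For the running time I would count the nested structure. The outer loop runs $|S|$ times, and since $U$ gains one element per iteration and never exceeds $S$, each inner while-loop runs $O(|S|)$ times. Each while-iteration scans the at most $|S|$ candidates $s\in S-U$ and, for each, tests the weak preference $V-S\succ_{\pi_s}U$. A single such test reduces to a group-preference (element-wise dominance) comparison between the top $k=\min(|U|,|V-S|)$ elements of $V-S$ and of $U$ under $\pi_s$; after light preprocessing of the preference lists — sorting $S$ under each $\pi_s$ once and recording the top $|S|$ elements of $V-S$ under each $\pi_s$ — the sorted top-$k$ lists are obtained by a single linear scan, so the comparison costs $O(|S|)$. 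Multiplying the four factors $|S|\cdot|S|\cdot|S|\cdot|S|$ yields $O(|S|^4)$. The main obstacle I anticipate is the completeness argument, specifically establishing and exploiting the invariant $U\subseteq T$ so that the search, once seeded inside a witness $T$, provably cannot escape $T$ and must terminate in failure; the time bound is then a routine count, modulo the observation that each weak-preference test inspects only the top $O(|S|)$ relevant elements and therefore avoids any dependence on $|V|$.
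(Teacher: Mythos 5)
Your proposal is correct and follows essentially the same route as the paper: soundness by taking $T=U$ at the point of failure, completeness via the invariant $U\subseteq T$ maintained through Lemma~\ref{proposition_top_lexicographical_2} in the iteration seeded inside a witness $T$, and the same four-factor count for the running time. The only difference is presentational: you spell out the invariant argument and the cost of a single weak-preference test in somewhat more detail than the paper does.
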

\begin{proof}
\textbf{Correctness: }

Suppose $S\notin\mathcal{C}_{SGS}(N)$. There must exist a non-empty set $T\subsetneq S$
that $V-S\succ_{\pi_s}T$ for all $s\in S-T$.
Note that the $For$ loop enumerates all $u\in S$.
If the algorithm terminates before any $u\in T$ is chosen at line \ref{init},
it surely outputs $False$.
Otherwise, we eventually have $u\in T$ at line \ref{init}.
By Lemma \ref{proposition_top_lexicographical_2}, $U\subseteq T$ always holds.
Hence, the $While$ loop will reach a state where $U\neq S$ and the condition at Line \ref{algorithm:if} is false.
Then, the algorithm also returns $False$.

On the other hand,
if the algorithm returns $False$,
all individuals in $S-U$ \NewPrefer $V-S$ to $U$,
which means that $S\notin\mathcal{C}_{SGS}(N)$.

\textbf{Complexity:}

There are two nested loops in Algorithm \ref{algorithm:dec_sgs_in}
and each loop runs at most $|S|$ times.
The complexity of checking the condition at line \ref{algorithm:if} is $O(|S|^2)$.
So, the complexity of Algorithm \ref{algorithm:dec_sgs_in} is $O(|S|^4)$.
\end{proof}

We are ready to show that $\mathcal{C}_{scomp}$ is efficiently checkable, in contrast to the coNP-completeness of checking the membership of $\mathcal{C}_{comp}$.

\begin{theorem}\label{theorem_time_complexity_deter_compre}
    Given a preference network $N=(V, \Pi)$ and $S\subseteq V$,
    it takes $O(|S|^4)$ time to decide whether $S\in \mathcal{C}_{scomp}(N)$ or not.
\end{theorem}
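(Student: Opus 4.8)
The plan is to exploit the definitional decomposition $\mathcal{C}_{scomp}=\mathcal{C}_{SGS}\cap\mathcal{C}_{SA'}$, which reduces the membership test to two independent tests whose results we conjoin. Concretely, $S\in\mathcal{C}_{scomp}(N)$ if and only if $S\in\mathcal{C}_{SGS}(N)$ and $S\in\mathcal{C}_{SA'}(N)$, so I would decide each membership separately and return \emph{True} exactly when both succeed. The running-time bound will then be the sum of the two costs, and the task is to show each is $O(|S|^4)$.

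First I would dispose of the $\mathcal{C}_{SA'}$ test, which is essentially free. By definition $\mathcal{C}_{SA'}(N)$ constrains only singletons: if $|S|\neq 1$ the defining condition is vacuous and $S\in\mathcal{C}_{SA'}(N)$ automatically, whereas if $S=\{s\}$ one merely looks up the rank $\pi_s(s)$ and checks $\pi_s(s)=1$. Either way this is an $O(1)$ operation, hence dominated by the target bound. Next I would invoke Lemma \ref{lemma_deter_tgs}, which states that Algorithm \ref{algorithm:dec_sgs_in} decides $S\in\mathcal{C}_{SGS}(N)$ in time $O(|S|^4)$. It is worth recording the small-case bookkeeping explicitly: a singleton is trivially strongly group-stable (there is no non-empty proper subset to witness instability), so for $|S|=1$ the answer is governed entirely by the $\mathcal{C}_{SA'}$ test, while for $|S|>1$ the $\mathcal{C}_{SA'}$ condition holds automatically and the $\mathcal{C}_{SGS}$ test is decisive. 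In every case the total cost is an $O(1)$ test plus an $O(|S|^4)$ test, giving $O(|S|^4)$ overall.

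I do not expect a genuine obstacle here: the theorem is a direct corollary of Lemma \ref{lemma_deter_tgs} together with the observation that the $\SAp$ component of $\mathcal{C}_{scomp}$ imposes only a trivially checkable singleton constraint. The only point demanding a little care is the handling of the degenerate cases $|S|\le 1$, ensuring that the two component tests are combined correctly so that the reported value is precisely the conjunction $S\in\mathcal{C}_{SGS}(N)\wedge S\in\mathcal{C}_{SA'}(N)$.
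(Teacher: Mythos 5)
Your proposal is correct and follows essentially the same route as the paper: split on whether $S$ is a singleton, invoke Lemma \ref{lemma_deter_tgs} for the $\mathcal{C}_{SGS}$ membership in $O(|S|^4)$ time, and observe that the $\SAp$ component reduces to a constant-time rank check $\pi_s(s)=1$ in the singleton case and is vacuous otherwise. Your explicit remark that singletons are trivially strongly group-stable is a small but welcome piece of bookkeeping that the paper leaves implicit.
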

\begin{proof}
If $|S|\geq 2$, it is equivalent to decide whether $S\in \mathcal{C}_{SGS}(N)$, which can be done in $O(|S|^4)$ by Algorithm \ref{algorithm:dec_sgs_in}.

By Lemma \ref{lemma_deter_tgs},
it takes $O(|S|^4)$ time to decide whether $S\in \mathcal{C}_{SGS}$, which means that it is in $O(|S|^4)$ to decide whether $S\in \mathcal{C}_{scomp}$.

Otherwise, $S$ is a singleton $\{s\}$. In this case, $S\in \mathcal{C}_{scomp}(N)$ if and only if $\pi_s(s)=1$, which can be decided in constant time.
\end{proof}

\section{Constructing community functions}\label{Constructive_community_functions}
\cite{teng2016itcs} proposed an open problem: Is there an ideal community function in their axiom system $\mathcal{A}$?
In this section, we first solve the problem in the our axiom system $\mathcal{SA}$ by constructing such a community function.
Since $\mathcal{SA}$ is stronger than $\mathcal{A}$, this community function is also $\mathcal{A}$-consistent, answering the original open problem.

\subsection{Consistent, constructive community functions}
A community function $\mathcal{C}$ is said to be \textit{constructive}
if for any preference network $N=(V,\Pi)$,
the membership of $ \mathcal{C}(N)$ can be checked in polynomial-time in $|V|$.
Constructive community functions are desirable, but it is not easy to figure out a natural, nontrivial, $\mathcal{A}$-consistent one
(\textit{nontrivial} means different from $\mathcal{C}_{cliq}$).
\cite{teng2016itcs} mainly considered two families of candidates, namely $\mathcal{C}_{cliq(g)}$ and $\mathcal{C}_{harmonious(\lambda)}$ with non-negative function $g$
and real number $\lambda\in [0, 1]$.
They showed that though $\mathcal{C}_{cliq(g)}\cap \mathcal{C}_{comp}$ and $\mathcal{C}_{harmonious(\lambda)}\cap \mathcal{C}_{comp}$ are $\mathcal{A}$-consistent,
they are not constructive in general.
The definitions of $\mathcal{C}_{cliq(g)}$ and $\mathcal{C}_{harmonious(\lambda)}$ are presented here in order to make this paper self-contained.

\begin{definition}[$\mathcal{C}_{cliq(g)}$]
Given a non-negative function $g: \{1,2,3,...\}\rightarrow \{0,1,2,3,...\}$, for any preference network $N=(V,\Pi)$ and $S\subseteq V$,
$S\in\mathcal{C}_{cliq(g)}(N)$ if $\forall u,s\in S,\pi_s(u)\in [1: |S|+g(|S|)]$.
\end{definition}

\begin{definition}[$\mathcal{C}_{harmonious(\lambda)}$]\label{definition_harmonious_lambda}
Given $\lambda\in [0, 1]$, for any preference network $N=(V,\Pi)$ and $S\subseteq V$, $S\in\mathcal{C}_{harmonious(\lambda)}(N)$
if $\forall u\in S,v\in V-S$, at least a $\lambda$-fraction of $\{\pi_s:s\in S\}$ prefer $u$ to $v$.
\end{definition}

In the context of $\mathcal{SA}$, we have the following theorem which helps finding consistent constructive community function.
\begin{theorem}\label{theorem:construct_community_function}
If a constructive community function $\mathcal{C}$ is $\{\A$, $\Mon$, $\Emb$, $\WC\}$-consistent, then $\mathcal{C}\cap\mathcal{C}_{scomp}$ is $\mathcal{SA}$-consistent and constructive.
\end{theorem}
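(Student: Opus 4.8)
The plan is to establish the two claimed properties—$\mathcal{SA}$-consistency and constructiveness—separately, since each follows by assembling results already proved in the excerpt. The theorem is essentially a packaging statement: consistency will come from the Intersection Lemma, and constructiveness from the polynomial-time membership test for $\mathcal{C}_{scomp}$.

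For $\mathcal{SA}$-consistency, I would invoke the Intersection Lemma (Lemma \ref{intersection_lemma}) once for each of the four axioms. Since $\mathcal{C}$ is $\{\A, \Mon, \Emb, \WC\}$-consistent, it satisfies every individual axiom $X \in \{\A, \Mon, \WC, \Emb\}$. Applying Lemma \ref{intersection_lemma} with this $X$ yields that $\mathcal{C} \cap \mathcal{C}_{scomp}$ satisfies $X$ together with $\SGS$ and $\SAp$. Ranging over all four choices of $X$, we collect that $\mathcal{C} \cap \mathcal{C}_{scomp}$ satisfies $\A$, $\Mon$, $\WC$, and $\Emb$, and—from any single application—$\SGS$ and $\SAp$ as well. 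These are exactly the six axioms comprising $\mathcal{SA}$, so $\mathcal{C} \cap \mathcal{C}_{scomp}$ is $\mathcal{SA}$-consistent.

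For constructiveness, recall that deciding $S \in (\mathcal{C} \cap \mathcal{C}_{scomp})(N)$ amounts to deciding both $S \in \mathcal{C}(N)$ and $S \in \mathcal{C}_{scomp}(N)$. The first test runs in polynomial time in $|V|$ because $\mathcal{C}$ is assumed constructive. The second test runs in $O(|S|^4)$ time by Theorem \ref{theorem_time_complexity_deter_compre}, and since $|S| \leq |V|$ this is $O(|V|^4)$, hence polynomial in $|V|$. Because membership in the intersection holds if and only if both individual tests succeed, the conjunction of two polynomial-time procedures remains polynomial, establishing that $\mathcal{C} \cap \mathcal{C}_{scomp}$ is constructive.

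There is no genuine obstacle here; the two halves are short reductions to prior results. The only points meriting care are bookkeeping ones: ensuring that the Intersection Lemma is applied separately for each of the four preserved axioms rather than assumed to act on their conjunction at once, and confirming that the $O(|S|^4)$ bound translates to polynomial time in $|V|$ via $|S| \leq |V|$. I would present the argument in exactly these two steps and keep it brief.
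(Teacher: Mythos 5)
Your proposal is correct and follows exactly the paper's own route: the paper's proof is the one-line observation that the theorem ``follows immediately from Lemma \ref{intersection_lemma} and Theorem \ref{theorem_time_complexity_deter_compre}'', and you have simply spelled out the same two reductions in detail. No further comment is needed.
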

\begin{proof}
The theorem follows immediately from Lemma \ref{intersection_lemma} and Theorem \ref{theorem_time_complexity_deter_compre}.
\end{proof}

This theorem greatly simplifies the task of finding $\mathcal{SA}$-consistent constructive community functions,
since it is relatively easier to find a constructive one that is $\{\A$, $\Mon$, $\Emb$, $\WC\}$-consistent.

For example, for any non-negative function $g$ and real number $\lambda\in [0, 1]$,
 both $\mathcal{C}_{cliq(g)}$ and $\mathcal{C}_{harmonious(\lambda)}$ conform with $\A$, $\Mon$, $\Emb$, and $\WC$.
 Hence, Theorem \ref{theorem:construct_community_function} immediately implies the following lemma.
\begin{lemma}\label{proposition:sa_consistent_a_consistent}
The community functions $\mathcal{C}_{cliq(g)}\cap \mathcal{C}_{scomp}$ and $\mathcal{C}_{harmonious(\lambda)}\cap \mathcal{C}_{scomp}$ are
$\mathcal{SA}$-consistent and constructive, for any non-negative function $g$ and any real number $\lambda\in [0, 1]$.
\end{lemma}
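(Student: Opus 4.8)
The plan is to invoke Theorem \ref{theorem:construct_community_function} twice, once for each of the two community functions. For this to work, I first need to verify that the hypotheses of that theorem are met: for a fixed non-negative function $g$ (respectively a fixed $\lambda\in[0,1]$), the function $\mathcal{C}_{cliq(g)}$ (respectively $\mathcal{C}_{harmonious(\lambda)}$) must be (i) constructive and (ii) $\{\A,\Mon,\Emb,\WC\}$-consistent. Once both are established, Theorem \ref{theorem:construct_community_function} directly yields that $\mathcal{C}_{cliq(g)}\cap\mathcal{C}_{scomp}$ and $\mathcal{C}_{harmonious(\lambda)}\cap\mathcal{C}_{scomp}$ are $\mathcal{SA}$-consistent and constructive, which is exactly the claim.

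The constructivity of each base function is immediate from its defining condition. For $\mathcal{C}_{cliq(g)}$, checking whether $S\in\mathcal{C}_{cliq(g)}(N)$ amounts to verifying that $\pi_s(u)\in[1:|S|+g(|S|)]$ for every pair $u,s\in S$; this is $O(|S|^2)$ rank comparisons (plus one evaluation of $g(|S|)$), hence polynomial in $|V|$. For $\mathcal{C}_{harmonious(\lambda)}$, one iterates over each $u\in S$ and each $v\in V-S$, counting how many preferences $\pi_s$ with $s\in S$ rank $u$ above $v$ and comparing the count against $\lambda|S|$; this is $O(|S|^2\cdot|V|)$, again polynomial. So condition (i) holds for both.

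For condition (ii), I would check the four axioms $\A$, $\Mon$, $\Emb$, and $\WC$ against each definition. \emph{Anonymity} holds because both membership conditions depend only on ranks $\pi_s(u)$ and on set cardinalities, which are preserved under relabeling by a permutation $\sigma$. \emph{World Community} holds since taking $S=V$ makes $V-S=\emptyset$, so the $\mathcal{C}_{harmonious(\lambda)}$ condition is vacuous, and for $\mathcal{C}_{cliq(g)}$ every rank lies in $[1:n]\subseteq[1:n+g(n)]$. \emph{Embedding} holds because both conditions are stated purely in terms of the internal preferences among the relevant individuals, which are unchanged when passing between $N'$ and $N$ with $\pi'_u(v)=\pi_u(v)$ on $V'$; this is the step that needs the most care, since one must argue that restricting to $V'$ neither creates nor destroys the defining inequalities. \emph{Monotonicity} holds because raising the ranks of members within $S$ while preserving their relative order can only help the defining inequalities: for $\mathcal{C}_{cliq(g)}$ it keeps ranks inside the allowed window, and for $\mathcal{C}_{harmonious(\lambda)}$ it can only increase the fraction of members preferring $u$ to $v$.

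The main obstacle is not any single hard argument but rather making the verification of the four axioms genuinely rigorous for both families simultaneously, with Embedding being the subtlest: one must confirm that the defining predicate of each function is \emph{local} in the precise sense that it references only the mutual rankings among individuals inside the community and between the community and its complement, so that it transfers cleanly between the embedded network and the ambient one. Since the paper's proof merely asserts that both families "conform with $\A,\Mon,\Emb,\WC$," I would either cite that these verifications are routine (as the paper does, deferring to \cite{teng2016itcs}) or include a short lemma discharging the four axioms; in either case the theorem then closes the argument in one line via Theorem \ref{theorem:construct_community_function}.
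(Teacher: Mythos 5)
Your proposal is correct and follows essentially the same route as the paper: verify that $\mathcal{C}_{cliq(g)}$ and $\mathcal{C}_{harmonious(\lambda)}$ are constructive and $\{\A,\Mon,\Emb,\WC\}$-consistent, then apply Theorem \ref{theorem:construct_community_function}. The paper merely asserts these verifications, whereas you sketch them explicitly (correctly identifying $\Emb$ as the step needing the most care), so your write-up is a more detailed version of the same one-line argument.
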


Furthermore, by Theorem \ref{proposition_gs_sgs_sa_ssa},
the two families of community functions in Lemma \ref{proposition:sa_consistent_a_consistent} are also $\mathcal{A}$-consistent and constructive.

In the next subsection we go further in this direction and find a community function having more nice properties,
solving the open problem posed by \cite{teng2016itcs}.

\subsection{$\mathcal{C}_{grow}$: an ideal community function} \label{subsection3.4}
An ideal community function should not only be
constructive,
but also allows to efficiently enumerate the communities.
\cite{DBLP:journals/corr/BorgsCMT14} (the full version of \citep{teng2016itcs})
defined two more properties for community functions $\mathcal{C}$:

\begin{itemize}
\item \textbf{Samplable}: given a preference network $N=(V,\Pi)$,
one can randomly sample any community from $\mathcal{C}(N)$ in time $P(|V|)$,  where $P(\cdot)$ is a polynomial function;
\item \textbf{Enumerable}: given a preference network $N=(V,\Pi)$,
one can enumerate $\mathcal{C}(N)$ in time $O(n^k |\mathcal{C}(N)|)$ for some constant $k$.
\end{itemize}

\cite{teng2016itcs} proposed the following open problem.
\begin{problem}[$\mathcal{A}\textbf{CCSE}$]\label{problem:itcs_ccse}
 Find a natural nontrivial community function that is $\mathcal{A}$-consistent, constructive, samplable, and enumerable.
\end{problem}


We solve the $\mathcal{A}\textbf{CCSE}$ problem in this subsection.
Actually, the following stronger problem is solved.

\begin{problem}[$\mathcal{SA}\textbf{CCSE}$]\label{problem:itcs_ccse}
Find a natural community function that allows nontrivially overlapping communities and is $\mathcal{SA}$-consistent, constructive, samplable, and enumerable.
\end{problem}

The basic idea of our solution is to reduce the problem according to Theorem \ref{theorem:construct_community_function}.
Namely, if we find a community function $\mathcal{C}$ satisfying
$\A$, $\Mon$, $\WC$, and $\Emb$, then $\mathcal{C}\cap\mathcal{C}_{scomp}$ is $\mathcal{SA}$-consistent.
Furthermore, if $\mathcal{C}$ is enumerable and $|\mathcal{C}(N)|$ is polynomial in $|V|$ for all preference networks $N=(V,\Pi)$,
$\mathcal{C}\cap\mathcal{C}_{scomp}$ will be enumerable and
samplable.

But how to find such a community function $\mathcal{C}$? Our approach is inspired by the method in \citep{palla2005uncovering}.
Roughly speaking, starting with best communities, i.e.,, cliques in a preference network,
we produce new good communities as many as possible by extending existing communities.
Intuitively, this growing process conforms with the formation of communities in real life (for example, consider how friendship forms).
The key of the approach is properly defining good communities.
For this end, we consider the following community function
which is adapted from $\mathcal{C}_{harmonious(\lambda)}$.

\begin{definition}[$\mathcal{C}_{harmon}$]\label{definition:harmon}
Given a preference network $N=(V,\Pi)$ and
a subset $S\subseteq V$, $S\in\mathcal{C}_{harmon}(N)$ if for all $u\in S$ and $v\in V-S$,
more than half of the members of $S-\{u\}$ prefer $u$ to $v$.
\end{definition}

The following lemma shows that $\mathcal{C}_{harmon}$ is efficiently checkable.

\begin{lemma}\label{lemma:time-of-harmon}
Given a preference network $N=(V,\Pi)$ and a set $S\subseteq V$,
it takes $O(|V|^3)$ time to determine whether $S$ is in $\mathcal{C}_{harmon}(N)$.
\end{lemma}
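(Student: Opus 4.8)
The plan is to bound the cost of directly verifying the defining condition of $\mathcal{C}_{harmon}$ from Definition \ref{definition:harmon}. Recall that $S\in\mathcal{C}_{harmon}(N)$ iff for every $u\in S$ and every $v\in V-S$, more than half of the members of $S-\{u\}$ prefer $u$ to $v$. The natural algorithm simply iterates over all pairs $(u,v)$ with $u\in S$ and $v\in V-S$, and for each such pair counts how many $s\in S-\{u\}$ satisfy $u\succ_{\pi_s}v$, comparing that count against $|S-\{u\}|/2$. If any pair fails the majority test, we reject; otherwise we accept.

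First I would observe that the preference network is given in a form allowing the rank $\pi_s(w)$ of any individual $w$ under any order $\pi_s$ to be looked up in $O(1)$ time (after an $O(|V|^2)$ preprocessing pass that builds the rank tables from the input orders, which is dominated by the final bound). With this, for a fixed pair $(u,v)$, testing $u\succ_{\pi_s}v$ for a single $s$ is $O(1)$, so counting over all $s\in S-\{u\}$ costs $O(|S|)$. Next I would count the iterations: there are at most $|S|$ choices of $u$ and at most $|V-S|\le|V|$ choices of $v$, giving at most $|S|\cdot|V|$ pairs. Multiplying by the $O(|S|)$ per-pair counting cost yields $O(|S|^2|V|)$, which is $O(|V|^3)$ since $|S|\le|V|$. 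This matches the claimed bound.

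The steps, in order, are: (1) state the direct verification algorithm following the definition; (2) note the $O(1)$ rank-lookup after linear-in-input-size preprocessing; (3) bound the per-pair majority count by $O(|S|)$; (4) bound the number of $(u,v)$ pairs by $O(|S|\cdot|V|)$; and (5) combine to get $O(|S|^2|V|)=O(|V|^3)$. Correctness is immediate because the algorithm is a faithful transcription of Definition \ref{definition:harmon}: it accepts exactly when the majority condition holds for all relevant pairs.

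I do not expect a genuine obstacle here, since the condition in Definition \ref{definition:harmon} is already a polynomially-checkable universal statement over pairs, unlike the group-stability condition whose hardness stemmed from quantifying over exponentially many equal-sized subsets. The only point requiring minor care is the accounting of the preprocessing versus the main loop, and making sure the bound is stated in terms of $|V|$ rather than $|S|$ as in the lemma statement; since $|S|\le|V|$, the coarser $O(|V|^3)$ bound follows directly and is what the lemma asks for.
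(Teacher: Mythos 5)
Your proposal is correct and follows essentially the same route as the paper's proof: iterate over all pairs $(u,v)$ with $u\in S$, $v\in V-S$, count in $O(|S|)$ time how many members of $S-\{u\}$ prefer $u$ to $v$, and multiply the bounds to get $O(|V|^3)$. The only difference is that you make explicit the $O(1)$ rank-lookup preprocessing, which the paper leaves implicit.
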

\begin{proof}
For each individual $u$ in $S$ and each individual $v$ in $V-S$, it takes $O(|S|)$ time to determine whether a majority of $S-\{u\}$ prefer $u$ to $v$,
which means that it takes $O(|V|^3)$ time to determine whether $S\in\mathcal{C}_{harmon}(N)$ or not.
\end{proof}

Based on the idea preceding Definition \ref{definition:harmon},
we define $\mathcal{C}_{grow}$ in terms of Algorithm \ref{algorithm_clique_spanning},
i.e., for any preference network $N$, $\mathcal{C}_{grow}(N):=\Call{CliqueGrowing}{N}$.

\begin{algorithm}
    \caption{The definition of $\mathcal{C}_{grow}$}
    \label{algorithm_clique_spanning}
    \textbf{Input} A preference network $N=(V,\Pi)$\\
    \textbf{Output} A collection $R$ of communities
    \begin{algorithmic}[1]
    \Procedure{CliqueGrowing}{$N$}
        \State $C\gets {\mathcal{C}_{cliq}(N)}$    \label{algorithm_get_all_clique} \Comment{$C$ keeps the communities to be extended}
        \State $R\gets \emptyset$ \Comment{$R$ keeps the final communities}
        \While{$C$ is not empty}
            \State Arbitrarily choose an element $S$ from $C$  \label{algorithm_cliq_span_save_res}
            \State $C\gets C\setminus\{S\}$, $R\gets R\cup\{S\}$
            \For{$u$ in $V-S$}    \label{algirhtm_clique_spanning_traversal_s_1}
                \State $S'\gets S\cup\{u\}$
                \If{$S'\in \mathcal{C}_{harmon}(N)$} \label{algorithm_deter_harm}
                    \State $C\gets C\cup\{S'\}$
                \EndIf
            \EndFor
        \EndWhile
        \State \Return $R$
    \EndProcedure
    \end{algorithmic}
\end{algorithm}

Algorithm \ref{algorithm_clique_spanning} is self-explanatory. We now show that the size of $\mathcal{C}_{grow}(N)$ and the time complexity of Algorithm \ref{algorithm_clique_spanning} are both polynomial in $|V|$.
\begin{lemma}\label{SizeAndComplexity}
For any preference network $N=(V,\Pi)$, the size of $\mathcal{C}_{grow}(N)$ is $O(|V|^2)$, and Algorithm \ref{algorithm_clique_spanning} terminates within time $O(|V|^6)$.
\end{lemma}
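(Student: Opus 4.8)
The plan is to derive both bounds from a single structural fact: the growth carried out by Algorithm~\ref{algorithm_clique_spanning} never branches. Concretely, I would first establish a \emph{unique-extension} property: for every non-empty $S\subseteq V$ there is at most one $u\in V-S$ with $S\cup\{u\}\in\mathcal{C}_{harmon}(N)$. Suppose two distinct candidates $a,b\in V-S$ both gave harmonious extensions. Applying Definition~\ref{definition:harmon} to $S\cup\{a\}$ with member $a$ and outsider $b$ shows that more than half of $(S\cup\{a\})-\{a\}=S$ prefer $a$ to $b$; symmetrically, applying it to $S\cup\{b\}$ with member $b$ and outsider $a$ shows more than half of $S$ prefer $b$ to $a$. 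Since each $\pi_w$ is a strict total order, every $w\in S$ prefers exactly one of $a,b$, so the two strict-majority counts partition $S$ yet each exceeds $|S|/2$ -- a contradiction. This observation is the crux of the whole argument, and it is short; everything else is bookkeeping.

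Given unique extension, I would argue that the sets produced from a single clique form a chain. By induction on the run of the algorithm, every set placed in $R$ is reachable from some clique by a sequence of harmonious single-vertex additions: the initial contents of $C$ are exactly the cliques, and every later insertion is a harmonious $S\cup\{u\}$ obtained from an already-processed $S$. Starting from a clique $K$, unique extension forces at most one successor at each step, so the sets grown from $K$ form a strictly increasing chain $K=S_0\subsetneq S_1\subsetneq\cdots\subsetneq S_m$ with $m\le |V|-|K|$; hence at most $|V|$ sets descend from $K$. By Lemma~\ref{proposition_clique_num} there are only $O(|V|)$ cliques, and $\mathcal{C}_{grow}(N)$ is the union of these chains, so by the union bound $|\mathcal{C}_{grow}(N)|\le O(|V|)\cdot|V|=O(|V|^2)$.

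For the running time I would bound the number of iterations of the \textsc{while} loop. Treating $C$ and $R$ with set (de-duplicating) semantics so that no set is processed twice, each iteration removes a distinct member that is ultimately recorded in $R$, so the iteration count equals $|\mathcal{C}_{grow}(N)|=O(|V|^2)$. A single iteration runs the inner \textsc{for} loop over at most $|V|$ candidates $u$, and for each it tests $S\cup\{u\}\in\mathcal{C}_{harmon}(N)$ in time $O(|V|^3)$ by Lemma~\ref{lemma:time-of-harmon}, costing $O(|V|^4)$ per iteration. The initial computation of $\mathcal{C}_{cliq}(N)$ is polynomial and, yielding only $O(|V|)$ cliques, is dominated by the loop. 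Multiplying, the total time is $O(|V|^2)\cdot O(|V|^4)=O(|V|^6)$, and termination is immediate because only finitely many distinct sets can ever enter $C$.

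I expect the only genuine obstacle to be the unique-extension property; once the ``two strict majorities over $S$ cannot coexist'' observation is in hand, both the quadratic size bound and the sextic time bound drop out of routine counting. The one subtlety to handle carefully is the set semantics of $C$: a set reachable from several cliques (where distinct chains converge) must not be reprocessed, and it is exactly this de-duplication that keeps the iteration count at $O(|V|^2)$ rather than a larger per-clique sum.
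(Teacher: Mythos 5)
Your proposal is correct and follows essentially the same route as the paper's proof: the unique-extension property of $\mathcal{C}_{harmon}$ (two strict majorities over $S$ cannot coexist), the resulting chains of length $O(|V|)$ from each of the $O(|V|)$ cliques given by Lemma~\ref{proposition_clique_num}, and the $O(|V|^2)\cdot O(|V|^4)$ time accounting. Your write-up is in fact slightly more careful than the paper's on the majority argument and on the bookkeeping for sets reachable from several cliques.
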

\begin{proof}
According to the definition of $\mathcal{C}_{harmon}$, given $S\subseteq V$, if $S\cup \{u\}\in \mathcal{C}_{harmon}(N)$ for some $u\in V-S$, then for any $v\in V-S$ with $v\neq u$, a majority of individuals in $S$ prefer $u$ to $v$. As a result, for any $v\in V-S$ with $v\neq u$, $S\cup \{u\}\notin \mathcal{C}_{harmon}(N)$. This means that for any $S\in C$, it can be extended by at most one $u\in V-S$ in the \textbf{for} loop. Hence, starting with any $S\in\mathcal{C}_{cliq}$,
at most $O(|V|)$ communities can be obtained. Since there are only $O(|V|)$ cliques by Lemma \ref{proposition_clique_num}, the size of $\mathcal{C}_{grow}(N)$ is $O(|V|^2)$.

Then we discuss the time complexity of Algorithm \ref{algorithm_clique_spanning}. The \textbf{If} statement at line \ref{algorithm_deter_harm}
is executed for $|V-S|$ times for each community $S$. This fact, together with Lemma \ref{lemma:time-of-harmon},
implies that the time complexity of Algorithm \ref{algorithm_clique_spanning} is $O(|V|^6)$.
\end{proof}

Lemma \ref{SizeAndComplexity} immediately indicates that $\mathcal{C}_{grow}$ is enumerable, samplable, and constructive. However, $\mathcal{C}_{grow}$ is not $\mathcal{SA}$-consistent. For example, in the preference network of Example \ref{example:1}, the set $\{1,2,3,4\}$ is a community defined by $\mathcal{C}_{grow}$, but it is not strongly group-stable since individual $4$ \NewPrefers $\{5,6\}$ to $\{1,2,3\}$.

Fortunately, $\mathcal{C}_{grow}\cap \mathcal{C}_{scomp}$ is a community function satisfying all the requirements in Problem \ref{problem:itcs_ccse}.

\begin{theorem}\label{theorem_c_span_samplale_enumrable}
$\mathcal{C}_{grow}\cap\mathcal{C}_{scomp}$ is a natural community function that allows nontrivially overlapping communities and solves both the $\mathcal{SA}$\textbf{CCSE} problem and the $\mathcal{A}\textbf{CCSE}$ problem.
\end{theorem}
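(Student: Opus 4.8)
The plan is to package everything through Theorem~\ref{theorem:construct_community_function}. Concretely, I would prove a single auxiliary fact --- that $\mathcal{C}_{grow}$ is constructive and $\{\A,\Mon,\Emb,\WC\}$-consistent --- and then read off all the required properties. $\mathcal{SA}$-consistency and constructiveness of $\mathcal{C}_{grow}\cap\mathcal{C}_{scomp}$ follow immediately from Theorem~\ref{theorem:construct_community_function}. $\mathcal{A}$-consistency is then free: an $\mathcal{SA}$-consistent function satisfies $\SGS$, hence $\GS$ by Theorem~\ref{proposition_gs_sgs_sa_ssa}, and shares the remaining axioms with $\mathcal{SA}$, so it is also $\mathcal{A}$-consistent; this is exactly why solving $\mathcal{SA}\textbf{CCSE}$ also solves $\mathcal{A}\textbf{CCSE}$. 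What remains is the auxiliary fact, the enumerability/samplability bookkeeping, and one explicit witness of overlap.

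For the axioms I would first isolate the \emph{chain characterization} built into Algorithm~\ref{algorithm_clique_spanning}: $S\in\mathcal{C}_{grow}(N)$ iff there is an increasing sequence $S_0\subsetneq S_1\subsetneq\cdots\subsetneq S_m=S$ with $S_0\in\mathcal{C}_{cliq}(N)$, each $S_i$ obtained from $S_{i-1}$ by inserting one vertex, and $S_i\in\mathcal{C}_{harmon}(N)$ for $i\geq 1$. Each axiom for $\mathcal{C}_{grow}$ is then obtained by transporting the same axiom for the two building blocks $\mathcal{C}_{cliq}$ and $\mathcal{C}_{harmon}$ along such chains. $\WC$ is immediate, since $V$ is always a clique and every clique is emitted. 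For $\Mon$ I would verify that $\mathcal{C}_{cliq}$ and $\mathcal{C}_{harmon}$ are both monotone --- improving the mutual rankings inside $S$ only preserves the clique inequalities and can only enlarge each majority --- and then use that the $\Mon$-hypothesis on $S$ restricts to every $S_i\subseteq S$, so a chain valid in $N$ is still valid in $N'$. $\A$ follows because both building blocks and the growing process are invariant under relabeling, making the whole family equivariant under any permutation.

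The delicate axiom, and the step I expect to be the main obstacle, is $\Emb$. The key observation is that the embedding hypothesis $\pi'_u(v)=\pi_u(v)$ for all $u,v\in V'$ forces every individual of $V'$ to rank all of $V'$ strictly above all of $V\setminus V'$; that is, $V'$ is a common top-set. Using this I would show $\mathcal{C}_{cliq}(N')=\mathcal{C}_{cliq}(N)\cap 2^{V'}$ and $\mathcal{C}_{harmon}(N')=\mathcal{C}_{harmon}(N)\cap 2^{V'}$: the forward inclusions are trivial, while the reverse inclusions use the top-set property precisely to settle the extra comparisons against members of $V\setminus V'$ (which every member of $V'\supseteq S$ ranks last). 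Lifting this to $\mathcal{C}_{grow}$ relies on the remark that every set along a chain ending at $S\subseteq V'$ is contained in $S$, hence in $V'$; so a witnessing chain in either network is a witnessing chain in the other, giving $\mathcal{C}_{grow}(N')=\mathcal{C}_{grow}(N)\cap 2^{V'}$.

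Finally, the quantitative properties rest on Lemma~\ref{SizeAndComplexity} and Theorem~\ref{theorem_time_complexity_deter_compre}. I would compute the $O(|V|^2)$-size family $\mathcal{C}_{grow}(N)$ in time $O(|V|^6)$, then filter it by testing $\mathcal{C}_{scomp}$-membership in $O(|V|^4)$ per set; this produces the whole list $(\mathcal{C}_{grow}\cap\mathcal{C}_{scomp})(N)$ in polynomial time, which is constructive, makes uniform sampling trivial, and (since the list is nonempty, containing $V$) gives enumeration in time $O(n^k|(\mathcal{C}_{grow}\cap\mathcal{C}_{scomp})(N)|)$. For nontrivial overlap I would exhibit an explicit network --- e.g.\ the heroes-and-sidekicks family of Lemma~\ref{richness} with $m\geq 3$, where the clique $H$ grows to the non-clique, strongly group-stable set $H\cup\{s_1\}$, together with a small companion producing two incomparable communities sharing a vertex --- which simultaneously certifies that $\mathcal{C}_{grow}\cap\mathcal{C}_{scomp}\neq\mathcal{C}_{cliq}$ (nontriviality) and that genuinely overlapping communities occur. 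Assembling these yields that $\mathcal{C}_{grow}\cap\mathcal{C}_{scomp}$ is natural, constructive, samplable, enumerable, $\mathcal{SA}$- and $\mathcal{A}$-consistent, with nontrivially overlapping communities, solving both problems.
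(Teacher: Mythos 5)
Your proposal follows the same skeleton as the paper's proof: reduce everything to showing that $\mathcal{C}_{grow}$ is constructive and $\{\A,\Mon,\Emb,\WC\}$-consistent (the paper invokes Lemma~\ref{intersection_lemma} directly where you invoke its packaged form, Theorem~\ref{theorem:construct_community_function}), obtain enumerability and samplability from Lemma~\ref{SizeAndComplexity} together with the polynomial membership test for $\mathcal{C}_{scomp}$, and derive $\mathcal{A}$-consistency from Theorem~\ref{proposition_gs_sgs_sa_ssa}. You diverge in two places. First, the paper dismisses the four-axiom verification as ``straightforward''; your chain characterization of $\mathcal{C}_{grow}$ and the top-set observation for $\Emb$ (the embedding hypothesis forces every $u\in V'$ to rank all of $V'$ above $V\setminus V'$) supply exactly the missing argument, and both are correct. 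Second, for nontriviality the paper compares $|\mathcal{C}_{cliq}(N)|=O(|V|)$ (Lemma~\ref{proposition_clique_num}) against the $\Omega(|V|\log|V|)$ lower bound of Lemma~\ref{proposition:lower_bound_on_grow}, whereas you propose an explicit small witness; your route is more elementary and avoids the heaviest lemma of the section. The one soft spot is the overlap claim: the heroes-and-sidekicks network of Lemma~\ref{richness} by itself only yields a nested chain $H\subset H\cup\{s_1\}\subset\cdots$, so it certifies $\mathcal{C}_{grow}\cap\mathcal{C}_{scomp}\neq\mathcal{C}_{cliq}$ but not genuinely overlapping communities, and the ``small companion'' that is supposed to produce two incomparable communities sharing a vertex is never exhibited. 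This is fixable --- for instance, the divide-and-conquer profile of Algorithm~\ref{set_preference_to_maximum_grow} produces incomparable communities $U_1\cup\{u_1\}$ and $U_2\cup\{v_1\}$ with $u_1\in U_2$ and $v_1\in U_1$, which properly overlap --- but as written that clause is an assertion rather than a proof, whereas the rest of your argument is complete.
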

\begin{proof}
By Lemma \ref{SizeAndComplexity}, we know that $\mathcal{C}_{grow}$ is enumerable, samplable, and constructive, and that the size of $\mathcal{C}_{grow}(N)$ is polynomial for any preference network $N$. Since $\mathcal{C}_{scomp}$ is constructive, $\mathcal{C}_{grow}\cap\mathcal{C}_{scomp}$ is enumerable, samplable, and constructive.

It is straightforward to show that $\mathcal{C}_{grow}$ satisfies $\A$, $\Mon$, $\WC$, and $\Emb$. According to Lemma \ref{intersection_lemma}, $\mathcal{C}_{grow}\cap\mathcal{C}_{scomp}$ is $\mathcal{SA}$-consistent.

By Lemmas \ref{proposition_clique_num} and \ref{proposition:lower_bound_on_grow}, $\mathcal{C}_{grow}\cap\mathcal{C}_{scomp}\neq\mathcal{C}_{cliq}$, namely, $\mathcal{C}_{grow}\cap\mathcal{C}_{scomp}$ is nontrivial.

As a result,  $\mathcal{C}_{grow}\cap\mathcal{C}_{scomp}$ solves the $\mathcal{SA}$\textbf{CCSE} problem. By Theorem \ref{proposition_gs_sgs_sa_ssa}, it also solves the $\mathcal{A}\textbf{CCSE}$ problem.
\end{proof}

Now we prove Lemmas \ref{proposition_clique_num} and \ref{proposition:lower_bound_on_grow} just used. First, recall a property of $\mathcal{C}_{cliq}$.

\begin{lemma}[\citep{teng2016itcs}]\label{lemma_clique_compatible}
Given a preference network $N=(V, \Pi)$ and $S_1, S_2\in \mathcal{C}_{cliq}(N)$,
then either $S_1\subseteq S_2$, $S_2\subseteq S_1$, or $S_1\cap S_2 = \emptyset$.
\end{lemma}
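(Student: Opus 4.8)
The final statement to prove is Lemma \ref{lemma_clique_compatible}, which asserts that any two cliques $S_1,S_2\in\mathcal{C}_{cliq}(N)$ are \emph{laminar}: either one contains the other, or they are disjoint. (This is attributed to \citep{teng2016itcs}, but I will sketch a self-contained argument.)

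The plan is a proof by contradiction. Suppose $S_1$ and $S_2$ are cliques that are neither nested nor disjoint, so that all three of the sets $S_1\setminus S_2$, $S_2\setminus S_1$, and $S_1\cap S_2$ are non-empty. First I would pick witnesses $a\in S_1\setminus S_2$, $b\in S_2\setminus S_1$, and $c\in S_1\cap S_2$. The idea is to exploit the clique condition from Definition of $\mathcal{C}_{cliq}$, which says that every member of a clique prefers \emph{every} member of that clique to \emph{every} non-member. The element $c$ lies in both cliques, so it will be the pivot through which I derive contradictory preferences.

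The key steps, in order, are as follows. Applying the clique condition to $S_1$ with the member $c$: since $a\in S_1$ and $b\in V\setminus S_1$ (because $b\notin S_1$), we get $a\succ_{\pi_c} b$. Now applying the clique condition to $S_2$ with the \emph{same} pivot $c$: since $b\in S_2$ and $a\in V\setminus S_2$ (because $a\notin S_2$), we get $b\succ_{\pi_c} a$. These two statements, $a\succ_{\pi_c}b$ and $b\succ_{\pi_c}a$, contradict the fact that $\pi_c$ is a total (strict) order on $V$. Hence the assumption that all three regions are non-empty is untenable, and one of them must be empty: if $S_1\setminus S_2=\emptyset$ then $S_1\subseteq S_2$; if $S_2\setminus S_1=\emptyset$ then $S_2\subseteq S_1$; and if $S_1\cap S_2=\emptyset$ they are disjoint. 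This exhausts the cases and yields the claim.

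I do not expect any serious obstacle here; the argument is elementary once the right pivot element $c\in S_1\cap S_2$ is chosen. The only point requiring mild care is the bookkeeping that $b\in V\setminus S_1$ and $a\in V\setminus S_2$ genuinely hold, which follows directly from having chosen $a$ outside $S_2$ and $b$ outside $S_1$; and the crucial reliance on preferences being \emph{strict} total orders, so that $a\succ_{\pi_c}b$ and $b\succ_{\pi_c}a$ cannot coexist. If preferences were allowed to be weak orders this would fail, but the definition of $L(V)$ in the preliminaries guarantees strictness, so the contradiction is clean.
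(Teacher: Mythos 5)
Your proof is correct. The paper itself gives no proof of this lemma---it is imported from \citep{teng2016itcs} as a known fact---so there is no in-paper argument to compare against; your pivot argument (choosing $c\in S_1\cap S_2$ and deriving both $a\succ_{\pi_c}b$ from the clique condition on $S_1$ and $b\succ_{\pi_c}a$ from the clique condition on $S_2$) is the standard laminarity argument, correctly uses the definition $\mathcal{C}_{cliq}(V,\Pi)=\{S\subseteq V:\forall u,v\in S,\forall w\in V-S,\ v\succ_{\pi_u}w\}$, and correctly relies on the strictness of the total orders in $L(V)$.
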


\begin{lemma}\label{proposition_clique_num}
For any preference network $N=(V, \Pi)$, the size of $\mathcal{C}_{cliq}(N)$ is $O(|V|)$.
\end{lemma}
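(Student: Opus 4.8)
The plan is to bound the number of cliques by exploiting the laminar (nested-or-disjoint) structure guaranteed by Lemma \ref{lemma_clique_compatible}. That lemma tells us that any two cliques $S_1, S_2 \in \mathcal{C}_{cliq}(N)$ satisfy $S_1 \subseteq S_2$, $S_2 \subseteq S_1$, or $S_1 \cap S_2 = \emptyset$; in other words, $\mathcal{C}_{cliq}(N)$ forms a laminar family over the ground set $V$. The central idea is that a laminar family of nonempty subsets of an $n$-element set can contain at most $2n-1$ members (the classic bound for laminar families), which is $O(|V|)$. So the whole argument reduces to invoking this combinatorial fact about laminar families and checking that $\mathcal{C}_{cliq}(N)$ indeed qualifies.

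First I would observe that the empty set and singletons can be handled trivially, so attention focuses on cliques of size at least $1$. Next I would recall or reprove the laminar-family size bound: one standard way is to associate with the laminar family a forest, where each set's parent is the smallest set in the family that strictly contains it. In such a forest the leaves correspond to minimal cliques, and one argues by induction on $|V|$ (or by a charging argument) that the total number of nodes is at most $2|V|-1$. Concretely, I would show that the number of internal nodes with at least two children is bounded by the number of leaves minus one, and that chains of single-child nodes can be collapsed or charged against the elements they newly capture. Each element of $V$ is ``introduced'' along the forest in a controlled way, yielding the linear bound.

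Alternatively, and perhaps more cleanly for a self-contained write-up, I would argue directly: each clique $S$ of size at least $2$ that is minimal among cliques strictly containing some fixed element can be charged to the element $v\in S$ at which $S$ first ``branches'' away from its children, and the nested structure ensures each element is charged a bounded number of times. Either route gives $|\mathcal{C}_{cliq}(N)| = O(|V|)$.

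The main obstacle I anticipate is purely bookkeeping rather than conceptual: one must be careful about whether $V$ itself, singletons, and the empty set are counted as cliques (the definition of $\mathcal{C}_{cliq}$ and the convention on $\emptyset$ matter for the exact constant), and one must verify that the laminar structure from Lemma \ref{lemma_clique_compatible} genuinely supports the forest construction without degenerate cases (for instance, equal cliques, or long chains of strictly nested cliques each adding exactly one element). These long nested chains are the subtle case, since naively a chain $S_1 \subsetneq S_2 \subsetneq \cdots$ could seem to allow many cliques; the resolution is that each strict containment consumes at least one fresh element of $V$, capping the chain length at $|V|$. Once that observation is in place, the linear bound follows immediately.
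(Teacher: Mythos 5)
Your proposal is correct and follows essentially the same route as the paper: both invoke the nested-or-disjoint structure from Lemma \ref{lemma_clique_compatible} and then apply the standard laminar-family bound of $2|V|-1$ via a tree/forest in which single-child chains are handled by a charging or virtual-node device (the paper adds a virtual leaf $S-S'$ whenever $S$ has a unique maximal sub-clique $S'$, then counts leaves as disjoint subsets of $V$). The subtlety you flag about long strictly nested chains is exactly the point the paper's virtual vertices are designed to address.
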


\begin{proof}%
%
For any preference network $N=(V, \Pi)$, construct a graph $T(N)$ corresponding to
$\mathcal{C}_{cliq}(N)$ as follows. Each vertex of $T(N)$ stands for a clique in $\mathcal{C}_{cliq}(N)$,
and an edge between two vertices exists if and only if the corresponding two cliques satisfy the condition
that one is a maximal clique inside the other. By Lemma \ref{lemma_clique_compatible},
$T(N)$ is a tree. Now construct another tree $T'(N)$ by extending $T(N)$ in this way: for any vertex $u$ of $T(N)$,
if the corresponding clique $S$ is such that there is only one maximal clique $S'$ inside $S$,
add a virtual vertex $v$ corresponding to $S-S'$ and add an edge between $u,v$. Obviously, $T'(N)$ remains a tree and the number of vertices of $T'(N)$ is no smaller than that of $T(N)$.

Now we view $T'(N)$ as a rooted tree whose root is the vertex corresponding to $V$.

The rooted tree  $T'(N)$ has two properties. First, every inner vertex has degree at least 2.
Second, the leaves are disjoint subsets of $V$, implying that there are at most $|V|$ leaves.
It is easy to see that such a tree has at most $|V|-1$ inner vertices.
As a result, the number of vertices of $T'(N)$ is at most $2|V|-1$, meaning that the size of $\mathcal{C}_{cliq}(N)$ is at most $2|V|-1$.
\end{proof}

\begin{lemma}\label{proposition:lower_bound_on_grow}
For any finite set $V$,
there exists a preference profile $\Pi$
such that the size of $\mathcal{C}_{grow}\cap\mathcal{C}_{scomp}(N)$
is $\Omega(|V|\log|V|)$, where $N=(V,\Pi)$.
\end{lemma}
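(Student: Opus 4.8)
The plan is to exhibit a single preference network $N=(V,\Pi)$ whose clique-growing process produces super-linearly many communities that are simultaneously strongly group-stable. The guiding dichotomy is already visible in the paper: by Lemma~\ref{proposition_clique_num} there are only $O(|V|)$ cliques, so any super-linear bound must be harvested from the growing steps; yet by (the argument behind) Lemma~\ref{SizeAndComplexity} every set has \emph{at most one} harmonious one-element extension. Hence $\mathcal{C}_{grow}(N)$ is a union of chains that may merge but can never branch, and $|\mathcal{C}_{grow}(N)|$ is exactly the number of distinct sets lying on these chains. To reach $\Omega(|V|\log|V|)$ I would engineer $t=\Theta(\log|V|)$ distinct growing \emph{threads}: each thread is a chain that starts at a singleton clique and absorbs one individual at a time, and the threads are arranged to use pairwise distinct intermediate sets (meeting only at the top, $V$). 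Since each thread contributes its own run of $\Call{CliqueGrowing}{N}$ in Algorithm~\ref{algorithm_clique_spanning}, all the sets on all threads lie in $\mathcal{C}_{grow}(N)$.

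Concretely, I would first design $\Pi$ so that (i) the $t$ chosen starters $a_1,\dots,a_t$ each rank themselves first, so that $\{a_i\}\in\mathcal{C}_{cliq}(N)$; and (ii) for each thread $i$ there is a prescribed absorption order whose preferences are tuned so that, at every set $S$ on thread $i$, the \emph{unique} member-gaining extension in $\mathcal{C}_{harmon}(N)$ is precisely the next set on the thread. Verifying (ii) reduces to checking Definition~\ref{definition:harmon} directly — for each $u$ inside and $v$ outside the candidate set, that a strict majority prefers $u$ to $v$ — together with the uniqueness observation of Lemma~\ref{SizeAndComplexity} to rule out spurious extensions. Because there are $\binom{|V|}{k}$ sets of each intermediate size $k$ while $t=O(\log|V|)$, there is ample room to keep the thread interiors pairwise distinct; with distinct endpoints the threads share only $V$, so the number of distinct communities is $t\cdot|V|-O(t)=\Omega(|V|\log|V|)$, which is the counting step (routine once the construction is in place).

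The remaining task is to show each counted set $S$ lies in $\mathcal{C}_{scomp}(N)$. Since every such $S$ has $|S|>1$, it automatically belongs to $\mathcal{C}_{SA'}$, so it suffices to prove $S\in\mathcal{C}_{SGS}(N)$, i.e.\ that for every non-empty $G\subsetneq S$ some $s\in S-G$ does not \NewPrefer $V-S$ to $G$. Here I would reuse the hero/sidekick-style separation of Lemma~\ref{richness}: arrange the preferences so that each thread maintains a \emph{core} of individuals whom everyone in $S$ ranks strictly above all of $V-S$. Then for any candidate deviating group $G$, some core member lies in $S-G$ and strictly prefers the elements of $G$ (and of the core) to every outsider, so it cannot weakly prefer $V-S$ to $G$; the passage from $G$ to the relevant comparison is handled by Lemma~\ref{proposition_top_lexicographical_2}.

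The hardest part will be meeting the two requirements \emph{simultaneously}. The growing condition forces the intermediate sets to be ``majority-harmonious,'' which pushes the preferences toward making members interchangeable, whereas strong group-stability demands a rigid, universally-preferred core that no sub-group wants to abandon; and because the unique-extension property forbids branching, the $\Theta(\log|V|)$ threads cannot simply fan out from a common clique but must be realized as essentially independent growth processes coexisting on the \emph{same} vertex set. Designing one explicit $\Pi$ that keeps every set on every thread both in $\mathcal{C}_{harmon}$ and strongly group-stable — and checking the $\mathcal{C}_{harmon}$ and $\SGS$ conditions uniformly across all sizes — is where the real work lies; once such a network is exhibited, the lower bound follows from the thread-counting above.
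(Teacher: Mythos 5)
There is a genuine gap: your proposal is a plan rather than a construction, and the specific architecture you commit to cannot be realized. You ask for $t=\Theta(\log|V|)$ growth threads, each running from a singleton clique up to the full set and ``meeting only at the top, $V$,'' with pairwise distinct interiors. But any thread that reaches $V$ must pass through a harmonious set of size $|V|-1$, and there can be at most one such set: if $V-\{x_1\}$ and $V-\{x_2\}$ were both in $\mathcal{C}_{harmon}(N)$ with $x_1\neq x_2$, then Definition~\ref{definition:harmon} (applied with $u=x_2,v=x_1$ in the first set and $u=x_1,v=x_2$ in the second) would require strictly more than half of the \emph{same} set $V-\{x_1,x_2\}$ to prefer $x_2$ to $x_1$ and also strictly more than half of it to prefer $x_1$ to $x_2$ --- a contradiction. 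The same majority-counting argument shows that two distinct harmonious sets of equal size can never differ in exactly one element (this is the uniqueness-of-extension fact you cite from Lemma~\ref{SizeAndComplexity}), so all your threads must coalesce well before size $|V|$. Hence ``$\Theta(\log|V|)$ threads of length $\Theta(|V|)$'' is not a viable counting scheme, and you never exhibit a profile $\Pi$ realizing any weaker variant; you explicitly defer ``the real work'' of making harmoniousness and strong group stability coexist, which is precisely the content of the lemma.

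The paper distributes the $\Omega(|V|\log|V|)$ count in the opposite way. Algorithm~\ref{set_preference_to_maximum_grow} recursively bipartitions $V$ into balanced halves $U_1,U_2$ and gives every member of $U_i$ the same fixed order on $U_{3-i}$, appended after its recursively built order on $U_i$. At each node $U$ of the recursion tree only $\Theta(|U|)$ new communities appear, namely the prefix-extensions $U_1\cup\{u_1,\dots,u_k\}$ and $U_2\cup\{v_1,\dots,v_k\}$ absorbing the other half in the fixed order; all but at most one of these are then shown to lie in $\mathcal{C}_{scomp}(N)$. The bound follows from the recurrence $F(U)=F(U_1)+F(U_2)+\Theta(|U|)$, i.e.\ $\Theta(|V|)$ per level times $\Theta(\log|V|)$ levels, with thread lengths \emph{halving} at each depth rather than staying linear. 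To salvage your approach you would need either this kind of geometric decay of thread lengths, or a proof that $\Theta(\log|V|)$ threads with distinct interiors can coexist up to size $c|V|$ for a constant $c<1$ --- which the majority constraints make delicate and which you do not supply. As written, the counting step fails at the top of the range of sizes and no explicit $\Pi$ is produced.
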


This lemma will be proved constructively. Before presenting the proof, we show the basic idea of the construction, and briefly explain how the construction algorithm works.

Intuitively,
to maximize the size of $\mathcal{C}_{grow}\cap\mathcal{C}_{scomp}(N)$,
we should maximize the size of $\mathcal{C}_{cliq}(N)$.
Besides, in order to absorb as many individuals as possible,
for any two members inside a community, they should have the same preference order on the individuals outside of the community.
For example, let $N=(V,\Pi)$ be a preference network where $V=\{1,2,3,4,5\}$ and $\pi_1=[1,2,3,4,5]$, $\pi_2=[2,1,3,4,5]$,
$\pi_3=[3,4,1,2,5]$, $\pi_4=[4,3,1,2,5]$ and $\pi_5=[1,2,3,4,5]$,
then $\{1,2,3\}$ and $\{3,4,1\}$ are in $\mathcal{C}_{grow}\cap\mathcal{C}_{scomp}(N)$ but not in $\mathcal{C}_{cliq}(N)$.

Following the above idea, we design Algorithm \ref{set_preference_to_maximum_grow} which outputs the desired preference profile.
Let's first introduce the notations.

Throughout the algorithm, $\sigma$ stands for an arbitrarily fixed total order on $V$.
For any $S\subseteq V$, $\sigma |_S$ is defined to be the total order 
restricted to $S$.
For example, if $S = \{1, 2, 3, 5\}\subset V=\{1,2,...5\}$ and $\sigma=[1,4,3,2,5]$, then $\sigma|_{S}$ is $[1,3,2,5]$.
For a preference network $N=(V,\Pi)$ and $v\in V$, we use $\Pi|_v$ to stand for the preference of $v$ in $N$.
For any disjoint finite sets $W',W''$ and preferences $\pi'\in L(W'),\pi''\in L(W'')$,
the \textit{concatenation} of $\pi'$ with $\pi''$, denoted by $\pi'\bullet\pi''$, is defined to be the preference $\pi\in L(W'\cup W'')$ as follows:
$\forall w'\in W'$, $\pi(w')=\pi'(w')$ and $\forall w''\in W''$, $\pi(w'')=|W'|+\pi''(w'')$.
Intuitively, $\pi$ locally preserves the ordering of $\pi'$ and $\pi''$ on $W'$ and $W''$, but globally prefers $W'$ to $W''$.
\begin{algorithm}
    \caption{Find a preference profile with big $\mathcal{C}_{grow}\cap\mathcal{C}_{scomp}$}
    \label{set_preference_to_maximum_grow}
    \textbf{Input} A set $V$\\
    \textbf{Output} A preference profile $\Pi$ on $V$
    \begin{algorithmic}[1]
    \Procedure{$GetProfile$}{$V$}
        \If{$V$ is a singleton $\{v\}$}  $\pi_v\gets [v]$
        \Else
        \State Partition $V$ into $V_1$ \& $V_2$ s.t. $0\leq |V_1|-|V_2|\leq 1$
        \State $\Pi_1\gets \Call{GetProfile}{V_1}$  \label{algorithm_set_pref_get_subset_pref_1}
        \State $\Pi_2\gets \Call{GetProfile}{V_2}$  \label{algorithm_set_pref_get_subset_pref_2}
        \For{$v\in V_1$}
            \State $\pi_v\gets \Pi_1|_v\bullet (\sigma|_{V_2})$
        \EndFor
        \For{$v\in V_2$}
            \State $\pi_v\gets \Pi_2|_v\bullet (\sigma|_{V_1})$
        \EndFor
        \EndIf
        \State \Return $\Pi=\{\pi_v\}_{v\in V}$
    \EndProcedure
    \end{algorithmic}
\end{algorithm}

The procedure \Call{$GetProfile$}{$V$} is in a divide-and-conquer style. It divides $V$ into two subsets $V_1$ and $V_2$ which are balanced in size. Then it
recursively determine the preference profiles $\Pi_1$ and $\Pi_2$ on $V_1$ and $V_2$, respectively. 
Finally,  the preferences in $\Pi_1$ and $\Pi_2$ are extended, resulting in the preference profile $\Pi$ on $V$. The extension of each preference in $\Pi_i$ is by concatenating it with $\sigma|_{V_{3-i}}$, for $i\in\{1,2\}$.

Now we prove that the preference profile $\Pi$ meets the requirement of Lemma \ref{proposition:lower_bound_on_grow}.

\begin{proof}
Fix $V$ throughout this proof. 
Let  $N=(V,\Pi)$ with $\Pi$ being the preference profile found by the algorithm. For any set $U\subseteq V$ that appears in the recursion, if $|U|>1$, the algorithm divides it into $U_1$ and $U_2$. Let $\mathcal{K}(U)=(\mathcal{C}_{grow}\cap\mathcal{C}_{scomp})(N)\cap 2^U$, the set of communities of $N$ inside $U$.
Define $F(U)=|\mathcal{K}(U)|$, and
$g(U)=|\mathcal{K}(U)\setminus (2^{U_1}\cup 2^{U_2})|$. Obviously, $F(U)=F(U_1)+F(U_2)+g(U)$ always holds. The show that $g(U)$ is linear in the size of $U$, we make two claims.

\textbf{Claim 1}: $|(\mathcal{C}_{grow}(N)\cap 2^U)\setminus (2^{U_1}\cup 2^{U_2})|=\Theta(|U|)$.

This claim can be proved in two steps.

First, by Algorithm \ref{set_preference_to_maximum_grow}, each clique $C\in \mathcal{C}_{cliq}(N)\cap 2^U$ with $C\neq U$ must satisfy either $C\subseteq U_1$ or $C\subseteq U_2$. Recalling the \textbf{for} loop of Algorithm \ref{algorithm_clique_spanning}, we have that if $S\in \mathcal{C}_{grow}(N)$  is obtained by extending $S'\subsetneq U_1$ with some $u\in V-S'$, then $u\in U_1$. So,  any community $S\in(\mathcal{C}_{grow}(N)\cap 2^U)\setminus (2^{U_1}\cup 2^{U_2})$ must be extended from (hence include) either $U_1$ or $U_2$.

Second, recall how Algorithm \ref{algorithm_clique_spanning} extends $U_1$ with individual in $U_2$. Suppose $\sigma|_{U_2}=[u_1,...u_{|U_2|}]$. Since $|U_1|\geq |U_2|$ and $\pi_v|_{U_2}=\pi_{v'}|_{U_2}=\sigma|_{U_2}$ for any $v,v'\in U_1$, one knows that for any community $S\in2^U$ with $U_1\subset S$, it is a community in $\mathcal{C}_{grow}(N)$ if and only if $S=U_1\cup\{u_1,...u_k\}$ for some $k\leq |U_2|$. Likewise, any community $S\in2^U$ with $U_2\subset S$ is a community in $\mathcal{C}_{grow}(N)$ if and only if $S=U_2\cup\{v_1,...v_k\}$ for some $k\leq |U_1|$, where the $v_i\in U_1$ are such that $\sigma|_{U_1}=[v_1,...v_{|U_1|}]$.

As a result, Claim 1 holds.

\textbf{Claim 2}: If $U\neq V$, there are at most one community in $(\mathcal{C}_{grow}(N)\cap 2^U)\setminus (2^{U_1}\cup 2^{U_2})\setminus \mathcal{C}_{scomp}(N)$.

This claim can be proved as follows. Arbitrarily choose $S\in(\mathcal{C}_{grow}(N)\cap 2^U)\setminus (2^{U_1}\cup 2^{U_2})$. We begin with the case $U_1\subseteq S$. Suppose that $S\notin\mathcal{C}_{scomp}(N)$. Then there is $T\subsetneq S$ such that $V-S\succ_{\pi_u} T$ for any $u\in S-T$. Since $U\neq V$, $|V-U|\geq |U|-1\geq |T|$. We again go in two steps.

First, arbitrarily choose $v\in T\cap U_1$ if $T\cap U_1\neq \emptyset$, otherwise choose $v\in T\cap U_2$. Suppose that $U_1\nsubseteq T$; arbitrarily choose $u\in U_1\setminus T$. By Algorithm \ref{set_preference_to_maximum_grow} and the characterization of $(\mathcal{C}_{grow}(N)\cap 2^U)\setminus (2^{U_1}\cup 2^{U_2})$ in the proof of Claim 1, we have $v\succ_{\pi_u} w$ for any $w\in V-S$. Considering that $|V-S|\geq |V-U|\geq |U|-1\geq |T|$ and $V-S\succ_{\pi_u} T$, it holds that $w\succ_{\pi_u} v$ for some $w\in V-S$, which is a contradiction. Hence, $U_1\subseteq T$.

Second, arbitrarily choose $u\in S-T\subseteq U$. 
By Algorithm \ref{set_preference_to_maximum_grow}, for any $v\in U,w\in V-U$, it holds that $v\succ_{\pi_u} w$. 
Recall that $V-S\succ_{\pi_u} T$ and $|V-S|\geq |T|$, so for any $v\in T$, there is a $w_v\in V-S$ such that $w_v\succ_{\pi_u} v$. Because $V-S=(U_2-S)\cup (V-U),T=(T-U_1)\cup U_1$, and $|U_2-S|<|U_2|\leq |U_1|$, we have $w_v\in V-U$ for some $v\in T$, contradictory to the fact that $v\succ_{\pi_u} w$ for any  $v\in T\subseteq U,w\in V-U$.

Therefore, for any $S\in(\mathcal{C}_{grow}(N)\cap 2^U)\setminus (2^{U_1}\cup 2^{U_2})$, if $U_1\subseteq S$, we get $S\in\mathcal{C}_{scomp}(N)$.

Likewise, if $U_2\subseteq S$, we can show that $S\in\mathcal{C}_{scomp}(N)$ except the unique case $S=\{u\}\cup U_2$ with $u\in U_1$.

As a result, by Claim 1, Claim 2 holds.

Claims 1 and 2 implies that $g(V)=O(|V|)$ and for any set $U\subsetneq V$ that appears in the recursion of Algorithm \ref{set_preference_to_maximum_grow},   $g(U)=\Theta(|U|)$. A simple calculation indicates that $F(V)=\Theta(|V|\log |V|)$.
\end{proof}

%

\begin{remark}
Actually, \cite{teng2016itcs} did not define enumerable or samplable community functions, but we find the definitions in \citep{DBLP:journals/corr/BorgsCMT14},
the arXiv version of \citep{teng2016itcs}. \cite{teng2016itcs} also defined the stability and
the open problem required that the community function has stable communities which are samplable and enumerable.
We find that even $\mathcal{C}_{cliq}$ is not stable under their definition, so we just ignore it.
\end{remark}

\begin{remark}
Lemma \ref{intersection_lemma} (the intersection lemma) plays an important role in solving the problem $\mathcal{SA}\textbf{CCSE}$
because $\mathcal{C}_{grow}$ itself does not satisfy $\SGS$.
To see this, consider Example \ref{example:1}.
Since $\{1, 2, 3\}$ is in $\mathcal{C}_{cliq}(N)$ and $\{1,2,3, 4\}$ is in $\mathcal{C}_{harmon}(N)$,
$\{1,2,3,4\}$ is in $\mathcal{C}_{grow}(N)$,
while it is not in $\mathcal{C}_{SGS}(V,\Pi)$ because $\pi_4$ weekly prefers $\{5\}$ to $\{1,2,3\}$.
\end{remark}

According to Theorem \ref{theorem_complex_gs},
if one uses axioms \emph{GS} and \emph{SA},
then it is coNP-complete to determine whether a given community is in $\mathcal{C}_{grow}\cap\mathcal{C}_{GS}\cap\mathcal{C}_{SA}(N)$.
This example also shows the importance of getting rid of the computational difficulties of checking the key axioms of \emph{GS} and \emph{SA}.

\section{Conclusion}\label{sec:con}

In this paper, we focus on axiomization of network community detection. As far as we know, only one paper \citep{teng2016itcs} in this line studied overlapping communities, and it was on a general structure -- preference networks. Among the six axioms in \citep{teng2016itcs}, two of them ($\GS$ and $\SA$) play a critical role, but the hardness in checking them compromises the practicality of the axiom system.
We showed that in the context of the axiom system, $\SA$ can be equivalently replaced by a degenerate version $\SAp$.
We also naturally modified axiom $\GS$ to a stronger version called $\SGS$.
We showed that both $\SGS$ and $\SAp$ can be checked in polynomial time, and most of the properties of the original system are preserved.
Furthermore, by the intersection lemma, we found two constructive and $\mathcal{SA}$-consistent community functions.
We also found an $\mathcal{SA}$-consistent, constructive, samplable and enumerable community function
that allows nontrivially overlapping communities,
thus answering to an open problem in \citep{teng2016itcs}.

Although the current work is purely theoretical,
it would be beneficial to evaluate and improve existing community detection algorithms in the framework of our axioms.
This is a direction of our future work.

Another direction is to further improve the axiom system. Before Lemma \ref{richness},
we discussed the richness, namely, many community functions might be consistent with the axioms.
This seems advantageous since good community functions are not likely to be precluded.
However, selectivity should also be considered,
otherwise users would be burdened with selecting desirable community functions from too many candidates.
Can we further improve the system of axiom by making a better trade-off between richness and selectivity?
\section*{Acknowledgment}
The authors would like to thank Shanghua Teng at USC for introducing the topic and having helpful online meetings with us. We also thank Wei Chen at MSR Asia and Pinyan Lu at Shanghai University of Finance and Economics for their valuable advice at the beginning of our work.

\bibliography{reference}
\bibliographystyle{named}
\end{document}